\documentclass[aps,prx,showkeys,twocolumn]{revtex4-2}

%%%%%%%% packages

\usepackage{amsmath,amsfonts,amssymb,amsthm,stmaryrd,amsbsy,bm}
\usepackage{mathrsfs,mathtools,empheq}
\usepackage{scalerel}
\usepackage{tcolorbox}
\usepackage{graphicx}
\usepackage{hyperref}
\usepackage{enumitem}
\usepackage{accents}
\usepackage{natbib}
\usepackage{tikz-cd}
%\usepackage{pict2e}
%\usetikzlibrary{cd}
%\usepackage{setspace}
%\usepackage{tabularx}
%\usepackage[normalem]{ulem}
\usepackage{multirow}
%\usepackage{todonotes}
%\listfiles

\makeatletter
\def\squiggly{\bgroup \markoverwith{\textcolor{red}{\lower3.5\p@\hbox{\sixly \char58}}}\ULon}
\makeatother

%%%% style stuff
\pagestyle{plain}

\numberwithin{equation}{section}

%\addtolength{\parskip}{15pt}
% \addtolength{\topmargin}{10pt}

%%% handy abbreviations

%%%%%%%%%%%%%%%%%%%

%%%%%%%%%%%%%%%%%%%%%%%%%%%
%% theorem environments
\theoremstyle{plain}
 \newtheorem{thm}{Theorem}[section]
 \newtheorem{lem}[thm]{Lemma}
 
\newtheorem{prop}[thm]{Proposition}

 \newtheorem*{lem*}{Lemma}
 \newtheorem*{cor*}{Corollary}
 \newtheorem*{thm*}{Theorem}
 \theoremstyle{definition}
 \newtheorem{defn}{Definition}[section]
 \theoremstyle{remark}

%%%%%%%%%%%%%%% macros

% \input{/home/paul/Papers/Macros/macros19}
%%%% from macros19
\newcommand{\Reals}{{\mathbb R}}    % Reals
\newcommand{\Cmplxs}{{\mathbb C}}   % Complexes
\newcommand{\evec}[1]{{\bm e}_{#1}}  % unit vector
\newcommand{\inpr}[2]{\left\langle {#1} \,\middle|\, {#2} \right\rangle}    % inner product
\newcommand{\ilinpr}[2]{\langle {#1} \,\mid\, {#2} \rangle}    % inner product
\newcommand{\defeq}{ {\kern 0.2em}:{\kern -0.5em}={\kern 0.2 em} }  % equal by def
\newcommand{\eqdef}{ {\kern 0.2em}={\kern -0.5em}:{\kern 0.2 em} }  % equal by def (backward)
\newcommand{\setof}[2]{\left\{ #1 \;\middle|\; #2 \right\}}  % set-builder notation
\newcommand{\Arr}[3]{{#2}\,\colon\, {#1} \rightarrow {#3}} % morphisms
  % im
\DeclareMathOperator{\ran}{ran}  % ran
\newcommand{\Id}{\mathrm{id}}  % identity operations
\newcommand{\dom}{\mathrm{dom}\,}

%%%%% Probability

\newcommand{\Prob}{P}
\newcommand{\Expct}{\mathsf E}

\DeclareMathOperator{\Var}{\mathrm{var}}

%%%%%%%%%%%%%%%%%%%%%%%%%%%%%5

  % 
\newcommand{\restrict}[2]{{#1}{\kern -0.3em}\upharpoonright{\kern -0.3em}{#2}}  % 
\DeclareMathOperator{\supp}{supp}
\DeclareMathOperator{\Span}{span}

 % index set   !!
\newcommand{\Site}{\mathsf{Site}} % index set   !!
\newcommand{\Sps}[1]{\mathsf{Sps}(#1)} % species   !!
\newcommand{\Cfg}{\mathsf{Cfg}}  % configuration space   !!
\newcommand{\Fns}{{\mathcal F}}

\DeclareMathOperator{\Av}{{\mathsf{Av}}}
%\DeclareMathOperator{\Ave}{\mathrm{Av}_{{\scriptscriptstyle \scrG}}}

%

%%%%%%%%%%%%%%%%%%%%%%%%%

\newcommand{\EOp}[1]{{\mathsf E}_{#1}}
\newcommand{\COp}[1]{{\mathsf C}_{#1}}

\newcommand{\Clust}{{\mathcal C}}

\DeclareMathOperator{\Mop}{\mathscr{M}}

%\newcommand{\ilinpr}[2]{\langle{#1}\mid{#2}\rangle}
%%%%%%%%%%%
\newcommand{\IS}{{\bf IS}}
\newcommand{\Gp}{G} %{{\mathscr G}}

\newcommand{\Smpl}{{\ensuremath{Smpl}}}
\newcommand{\Small}{{\ensuremath{Smll}}}

\newcommand{\x}{x}

%%%%% backward compatibility
\newcommand{\wh}[1]{\widehat{#1}}
\newcommand{\isit}[1]{\delta_{#1}}
\newcommand{\Up}[2]{\Lambda_{#1}^{#2}}
\newcommand{\ssp}[1]{{\mathcal #1}}

\newcommand{\prS}{{\mathsf S}} 
\newcommand{\prV}{{\mathsf V}} 
\newcommand{\prM}{{\mathsf M}} 
 % deprecated
 % deprecated
 % deprecated
\DeclareMathOperator{\fold}{\mathsf{fold}}

\newcommand{\hide}[1]{}
\newcommand{\Ax}[1]{{\textbf{#1}}}
\renewcommand{\implies}{\ensuremath{\;\Rightarrow\;}}

%%%%%%%%%%%%%%%

\renewcommand{\evec}{\vec{e}}
\newcommand{\vals}{{\mathcal V}}
\newcommand{\ClSh}{\mathsf{ClSh}}

%%%%%%%%%%%%%%%%5  (specific)

%%%%%%%%%%%%%%%%
% Setup the matha and mathx font (from mathabx.sty)
\DeclareFontFamily{U}{matha}{\hyphenchar\font45}
\DeclareFontShape{U}{matha}{m}{n}{
      <5> <6> <7> <8> <9> <10> gen * matha
      <10.95> matha10 <12> <14.4> <17.28> <20.74> <24.88> matha12
      }{}
\DeclareSymbolFont{matha}{U}{matha}{m}{n}
\DeclareFontFamily{U}{mathx}{\hyphenchar\font45}
\DeclareFontShape{U}{mathx}{m}{n}{
      <5> <6> <7> <8> <9> <10>
      <10.95> <12> <14.4> <17.28> <20.74> <24.88>
      mathx10
      }{}
\DeclareSymbolFont{mathx}{U}{mathx}{m}{n}

\DeclareMathSymbol{\obot}         {2}{matha}{"6B}
\DeclareMathSymbol{\bigobot}       {1}{mathx}{"CB}
%%%%%%%%%%%%%%%%%%
\begin{document}
%\pagecolor{brown!22}
%%%%%%%%%%%%%%%%%%%%%%%%%%%%%%%%%%%%%%
\title{Cluster expansion methods from physical concepts}
\author{Paul E. Lammert}
\email{lammert@psu.edu}
\affiliation{Department of Physics,
Pennsylvania State University, University Park, PA 16802-6300}
\author{Vincent H. Crespi}
\affiliation{Department of Physics,
Pennsylvania State University, University Park, PA 16802-6300}
\affiliation{Department of Chemistry, 
Pennsylvania State University, University Park, PA 16802-6300}
\affiliation{Department of Materials Science and Engineering 
Pennsylvania State University, University Park, PA 16802-6300}

%%%%%%%%%%%%%%%%%%%%%%%%%%%%%%%%%%%%%%
\begin{abstract}
  The cluster expansion formalism used in materials science is reconstructed on
  an axiomatic basis with the aims of clarifying underlying concepts
  and improving computational procedures, and without using conventional
  cluster functions.
Instead, cluster components of configuration functions are defined in an
intrinsic manner, which can be viewed as M\"{o}bius inversion of conditional
expectation.
The associated method for fitting a model to a configurational sample
is grounded entirely in Hilbert space geometry.
By constructing models directly from the given data, we avoid an
underdetermination problem to which the conventional approach is subject.
Tensor observables are treated on an equal footing with scalar observables.
\end{abstract}
\keywords{cluster expansion; alloy theory; M\"{o}bius inversion}
%%%%%%%%%%%%%%%% 
\date{Aug. 30, 2022}
\maketitle
\tableofcontents

\section{Introduction}

Discovery and elucidation of structure-property relationships is a core
concern of materials science\cite{van-de-Walle-08}.
For dealing with configurational dependence,
cluster expansion methods\cite{Sanchez+84,de-Fontaine-94,Zunger-94,Sanchez-10,Sanchez-17b,Wolverton+de-Fontaine-94,van-de-Walle+02,Lerch+09,Angqvist+19,Chang+19} have become a widespread tool, as attested by recent
reviews\cite{Wu+16,Van-der-Ven+18,Kadkhodaei+Munoz-21}.
There are now many software packages and
libraries\cite{ICET,ATAT,CASM,CELL,CLEASE}
allowing one to incorporate cluster expansion into multiscale modeling.
We suggest that, despite this level of development, the core formalism is not
understood as well as it should be.
A symptom of this is rigid adherence to specific, pre-defined,
orthonormal bases of so-called {\it cluster functions} $\Phi_I^\alpha$, so that
one cannot describe what is going on in a basis-free manner.
This is like distinguishing {\it p}\/-orbitals
not by using the concept of angular momentum, but through a fixed $p_x$, $p_y$
and $p_z$ basis. From a practical standpoint, the problem with such a situation is
that, internally, one may fail to see opportunities to do computations
more efficiently or robustly,
while externally, distinguishing cluster expansion from other potential ways
of solving the modeling problem (see paragraph after next) or even conceiving possible
contours of such a thing are more difficult. 
For example, imagine some seemingly new of machine learning algorithm.
How could we tell whether it was really a sort of cluster expansion in disguise?

The project of this paper is first to unearth and clearly display the logic of
cluster expansion methods --- what do they mean? Then we show the practical
utility of our answer.
In addition to its intrinsic interest, this supports the claim that the
proposed view of matters is a specially good one.
The fundamental issue is to understand whether there is some
well-defined and nonarbitrary principle whereby a configuration function
is broken down into elementary building blocks like $n$-body interactions.
We show that there is such a principle (See Def.~\ref{def:cluster-decomposition}).
However, the decomposition is not absolute, but relative to a probability distribution on
configurations.
A Hilbert space structure (i.e., inner product), an unexplained basic ingredient of the
conventional formalism, appear as a natural consequence of this conceptual analysis.
Section~\ref{sec:independence} is then able to explain what makes the
conventional cluster functions $\Phi_I^\alpha$ special --- they provide bases
of structurally defined cluster subspaces when sites are probabilistically independent.

Return to the {\it modeling problem} mentioned earlier.
Given total energy on a sampling of configurations, how can we find
an approximate decomposition into zero-body, one-body, two-body, \dots\ interactions?
This problem (not only for energy) makes up the great majority of applications
of cluster expansion methods.
Our position is that such modeling should explicitly reflect and incorporate the
Hilbert space structure of the space of configuration functions.
The conventional approach does not do that, but rather
fits coefficients $J_I^\alpha$ of a model $\sum J_I^\alpha \Phi_I^\alpha$
as though the $\Phi_I^\alpha$ were generic expansion functions.
In contrast, we treat the problem from a perspective based on Hilbert space geometry.
A model is expanded on a basis derived directly from the sample data,
thereby avoiding an underdetermination problem to which the conventional approach
is prone.

\subsection*{Outline}

We give now a more detailed guide to the contents.
But first, some general comments.
It may appear at first glance that the paper builds up a huge and unwieldy apparatus.
However, much of it is one-time analysis, with the goal of a clear and {\em flexible}
formalism. We also pay attention to computational complexity.
The mathematics involved is mostly discrete probability and
finite-dimensional vector/Hilbert spaces, theory covered in many
quantum mechanics courses such as \onlinecite{Sakurai-QM,Bohm-QM}.
Sections~\ref{sec:symmetry} and \ref{sec:efficiency}
involve a little general group representation theory\cite{Lax,Dresselhaus}.
Generally, it is safe to skip things enclosed in ``{\it Proof} \ldots $\qedsymbol$\ ''.

Section~\ref{sec:critical-review} is a brief review of the conventional
cluster expansion formalism. It expands on the motivations already given
in this Introduction.
Section~\ref{sec:notation} establishes notation which will be used throughout.
Section~\ref{sec:from-scratch} begins our new development from the ground up,
introducing the ``best approximation'' idea and showing
how elementary building blocks (cluster components) can be extracted.
At least from a logical perspective,
Section~\ref{sec:axioms-and-consequences} is the core of the new formulation.
We give informal arguments for a small set of axioms, the idea being to sharpen
and codify an initially inchoate concept of best approximation in this context
of arbitrary configuration functions.
Unpacked, the axioms say we have a probability distribution on configurations.
This is not surprising, and possibly even reassuring.
Very importantly for what follows, this section sets up the naturally
associated Hilbert space.

The conventional formalism requires probabilistic independence of
the degrees of freedom on distinct sites.
Section~\ref{sec:independence} explores this restriction, developing a
long list of equivalents and consequences.
From that point on, we assume that the independence condition holds.
Section \ref{sec:Indep-defense} explains that without that restriction, cluster
components have properties which are probably counter to intensions.
Site independence is therefore presented as an {\it secondary axiom}.
Section~\ref{sec:Phis} recovers the conventional cluster functions
in asking for orthonormal bases (ONBs) of the cluster subspaces.
At that point, the core formalism is finished.
Section~\ref{sec:intermission} recapitulates and does some stage-setting
for the more applied part which follows.
Section~\ref{sec:symmetry} brings both space-group symmetry and
tensor-valued observables into the formalism; it is natural to
consider these together. Many important material properties are tensorial,
and they are incorporated into our framework with no more difficulty than
than into the conventional one\cite{van-de-Walle-08}.
Section~\ref{sec:efficiency} shows that eschewing cluster functions
is computationally reasonable.
Finally, section~\ref{sec:modeling} tackles the modeling problem
referred to in the second paragraph of this introduction.
Our approach is exclusively through Hilbert space geometry.
We show that it is practical to construct a model directly from
the sampled data, and that this even avoids an underdetermination
problem which affects the conventional formalism.

\section{Critical review of conventional formalism}
\label{sec:critical-review}

This Section gives a brief overview of the conventional cluster expansion formalism.
The purposes are to amplify points made earlier, make the contrast with our
approach clearer, and help those familiar with the conventional methods to avoid
confusion when we do something different. 
In principle, none of this is crucial for what follows.
The new formalism is developed from the ground up in the following sections.

\subsection{Apparatus}\label{sec:conventional-apparatus}

The basic context for cluster expansion methods is a finite collection
$\Site$ of {\it sites}, and a collection $\mathsf{Sps}$ of {\it species} that may
occupy the sites. 
Typically $\mathsf{Sps}$ is something like \{Ni,Zn,Mg,Cu,Co\} and the sites are
sites of a (finite or periodic) crystal structure.
A configuration $x$ is an assignment $i\mapsto x_i$ of species to sites.
The set of all configurations is denoted $\Cfg$.
The subject that cluster expansion talks about is real or complex
configuration functions. These comprise a vector space $\Fns$. 

There are two ubiquitous pieces of machinery in the cluster expansion literature:
encoding by numbers, and the use of so-called cluster functions as a basis for
$\Fns$. The former is really a somewhat superficial aspect, but we discuss it
anyway due to its ubiquity and the influence it has on the cluster functions.

\subsubsection{encoding}\label{sec:sketch-encoding}

Encoding is almost universally presented as a matter of replacing
$\mathsf{Sps}$ with a set of numbers. Typically, with our example $\mathsf{Sps}$, we would have
\hbox{Ni $\mapsto 1$}, \ldots, \hbox{Co $\mapsto 5$}, or 
\hbox{Ni $\mapsto -2$}, \ldots, \hbox{Co $\mapsto 2$}.
Each site $i$ is then equipped with a
{\it site variable}, {\it occupation variable}, or {\it spin variable}
(common terms), such that, for example, $\sigma_i = 1$ means that there is Ni at site $i$.

We believe that, although formally it changes almost nothing,
a better perspective is to consider $\sigma_i$ not as a variable,
but as a {\em fixed} function, an element of $\Fns$, because of the way the
$\sigma_i$'s are used.
$\Fns$ is not only a vector space, it is an algebra, that is, its members
can be multiplied together: $(fg)(x) = f(x)g(x)$.
The algebra generated by the $\sigma_i$'s, that is, the set of polynomials
in them collectively, is necessarily a subset of $\Fns$.
If each $\sigma_i$ is one-to-one, this algebra is all of $\Fns$.
The key point is the following lemma.
The $*$ appearing in it can be thought of as standing for
a generic, uspecified, or abstract site.
A proof is given for completeness, but none of the ideas therein are
used anywhere else.
\begin{lem}\label{lem:encoding}
  Suppose $\sigma_*$ is a one-to-one real (or complex) function on
  a set $\mathsf{Sps}$ with $N$ elements ($|\mathsf{Sps}| = N$).
  Then, the powers $1=\sigma_*^0,\sigma_*,\sigma_*^2,\ldots,\sigma_*^{N-1}$ are
  a linear basis for the space $\Fns(*)$ of functions $\mathsf{Sps}\rightarrow \Reals$.
  For some polynomial $p$ of degree $N-1$, $\sigma_*^N = p(\sigma_*)$.
\end{lem}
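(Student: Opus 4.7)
The plan is to prove linear independence of $1, \sigma_*, \sigma_*^2, \ldots, \sigma_*^{N-1}$ and observe that $\dim \Fns(*) = N$, whence a linearly independent set of size $N$ automatically spans. The second assertion then falls out by expanding $\sigma_*^N$, which lies in $\Fns(*)$, on this basis.

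First I would record that $\dim \Fns(*) = N$. This is immediate from the fact that a function on the $N$-element set $\mathsf{Sps}$ is determined by its $N$ values; the indicators $\{\isit{s} : s \in \mathsf{Sps}\}$ furnish an obvious basis. So it suffices to exhibit $N$ linearly independent powers of $\sigma_*$.

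For linear independence, I would suppose $\sum_{k=0}^{N-1} c_k \sigma_*^k = 0$ in $\Fns(*)$ and evaluate at each $s \in \mathsf{Sps}$, obtaining $\sum_{k=0}^{N-1} c_k \sigma_*(s)^k = 0$. Because $\sigma_*$ is injective, the $N$ scalars $\{\sigma_*(s) : s \in \mathsf{Sps}\}$ are distinct roots of the polynomial $q(t) = \sum_{k=0}^{N-1} c_k t^k$, which has degree at most $N-1$; hence $q \equiv 0$ and all $c_k$ vanish. (An equivalent and more self-contained formulation: the coefficients satisfy a homogeneous system whose matrix is the $N\times N$ Vandermonde matrix $[\sigma_*(s)^k]_{s,k}$, nonsingular precisely because the $\sigma_*(s)$ are pairwise distinct.)

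With the basis established, $\sigma_*^N \in \Fns(*)$ has a unique expansion $\sigma_*^N = \sum_{k=0}^{N-1} a_k \sigma_*^k$, which reads $\sigma_*^N = p(\sigma_*)$ for $p(t) = \sum_{k=0}^{N-1} a_k t^k$, a polynomial of degree at most $N-1$. Explicitly, $p$ is the interpolant of $t \mapsto t^N$ at the $N$ distinct points $\sigma_*(s)$; equivalently $t^N - p(t) = \prod_{s \in \mathsf{Sps}}(t - \sigma_*(s))$. There is really no obstacle here beyond the one nontrivial input, namely that injectivity of $\sigma_*$ yields a nondegenerate Vandermonde; everything else is bookkeeping.
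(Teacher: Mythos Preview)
Your proof is correct but takes the opposite route from the paper's. You establish linear independence of $1,\sigma_*,\ldots,\sigma_*^{N-1}$ directly via the Vandermonde/distinct-roots argument and then obtain spanning for free by dimension count. The paper instead proves spanning first, by observing that each indicator $\delta_i$ is a scalar multiple of $\prod_{j\neq i}[\sigma_* - \sigma_*(s_j)]$ (Lagrange interpolation), and then runs a minimality argument: let $M$ be the least index with $\sigma_*^M$ dependent on lower powers; the first $M$ powers then form a basis, forcing $M=N$. Your approach is marginally more direct; the paper's has the incidental payoff of exhibiting the indicators explicitly as polynomials in $\sigma_*$. You are also a bit more careful in writing that $p$ has degree \emph{at most} $N-1$, which is all one can assert in general (the $t^{N-1}$ coefficient of $t^N-\prod_s(t-\sigma_*(s))$ is $\sum_s\sigma_*(s)$, which may vanish).
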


Thus, for each site $i$, the space $\Fns(i)$ of functions depending on only
the species at site $i$
is a copy of $\Fns(*)$, and $1,\sigma_i,\ldots,\sigma_i^{|\mathsf{Sps}|-1}$ is
a basis of $\Fns(i)$.
Furthermore, the set of all monomials $\prod_{i\in\Site} \sigma_i^{m_i}$,
with each $m_i$ in $\{0,1,\ldots,|\mathsf{Sps}|-1\}$, is therefore a linear
basis of $\Fns$. 
Since factors $\sigma_i^0$ make no difference to the product, we
can equally describe this basis as consisting of monomials $\prod_{i\in I} \sigma_i^{m_i}$
for some set of sites $I$, and with $1 \le m_i \le |\mathsf{Sps}|-1$.

Any member of $\Fns$ can be expressed as a polynomial in the $\sigma_i$'s,
this representation is highly nonunique. Although this fact has little effect on
most applications, it certainly complicates theoretical considerations, nor
does it seem to us to have practical advantages over dealing directly with
configuration functions.

\begin{proof}[Proof of Lemma~\ref{lem:encoding}]
  Let $\mathsf{Sps} = \{s_1,\ldots,s_N\}$, and $\delta_i$ be the indicator function of
  $s_i$ ($\delta_i(s_j)$ equals 1 if $j=i$, otherwise zero).
  (1) The dimension of $\Fns(*)$ is $N$ because the $\delta_i$'s are a basis.
(2) 
Since $\delta_1$ is a multiple of
$[\sigma_* - \sigma_*(s_2)] \cdots [\sigma_* - \sigma_*(s_N)]$, and similarly
for the other indicators, polynomials in $\sigma_*$ span $\Fns(*)$.
Hence, any linearly independent collections of polynomials can be expanded to
one of cardinality $N$, but no larger.
(3) Now, let $M$ be smallest such that $\sigma_*^M$ is linearly dependent on
lower powers, so $\sigma_*^0,\sigma_*^1,\ldots,\sigma_*^{M-1}$ are
linearly independent and for some polynomial $p$ of degree $M-1$,
$\sigma_*^M = p(\sigma_*)$.
Repeated use of this equation allows allows to express also every higher
power of $\sigma_*$ as a linear combination of $\sigma_*^0,\ldots,\sigma_*^{M-1}$.
Hence these must form a basis, and $M=N$.
\end{proof}

\subsubsection{orthonormal bases of cluster functions}
\label{sec:sketch-cluster-functions}

The most straightforward way to obtain the standard orthonormal bases
(ONBs) of {\it cluster functions}, and one common in the literature,
is to treat $\Fns$ as a tensor product of Hilbert spaces.
Give the vector space $\Fns(i)$ of functions of the configuration variable
at site $i$ an inner product according to
\begin{equation}
  \label{eq:conventional-inner-prod-0}
  \inpr{h}{g}_i
  = \frac{1}{|\mathsf{Sps}|}\sum_{x_i\in\mathsf{Sps}} h(x_i)g(x_i).
\end{equation}
This turns $\Fns(i)$ into a Hilbert space.
Now regarding $\Fns$ as the tensor product $\bigotimes_i \Fns(i)$, it acquires
the inner product (recall, $\Cfg$ is the configuration space)
\begin{equation}
  \label{eq:conventional-inner-prod}
\inpr{h}{g} = \frac{1}{|\Cfg|}\sum_{x\in\Cfg} h(x)g(x).
\end{equation}

Typically, the notion of tensor product is explained in terms of
orthonormal bases (ONBs).
Returning to the linear basis $1,\sigma_*,\ldots,\sigma_*^{|\mathsf{Sps}|-1}$ of 
the generic single-site space $\Fns(*)$, construct an ONB by Gram-Schmidt
orthogonalization.
This gives orthonormal functions $\varphi_k(\sigma_*)$ for
\hbox{$k=0,1,\ldots,|\mathsf{Sps}|-1$},
where $\varphi_k$ is a degree-$k$ polynomial.
All possible products $\prod \varphi_{k(i)}(\sigma_i)$ over all sites
then constitutes an ONB for $\Fns$. Factors of $\varphi_0$
can be ignored, since they equal the constant one, so we can also organize this
ONB as all products
\begin{equation}
  \label{eq:Phi}
\Phi_I^\alpha = \prod_{i\in I} \varphi_{\alpha(i)}(\sigma_i),  
\end{equation}
where $I$ runs over sets of sites and $\alpha$ assigns a {\em nonzero} index
to each site in $I$. We will call the set $I$ the {\it support} of $\Phi_I^\alpha$.
If $I$ is empty, then there are no factors in the product.
The functions $\Phi_I^\alpha$ are often called {\it cluster functions},
though sometimes simply {\it basis functions}. Our notation is close to that
of Ref. \onlinecite{Wolverton+de-Fontaine-94}.
Other common notations are $\Phi_\alpha$, where the support is left implicit
in $\alpha$, or $\Pi_\alpha$, suggesting a product.
We prefer to make the support explicit and leave the range of $\alpha$ implicit.

%%%%%%%%%%%%%%%
\subsection{Shortcomings}\label{sec:criticism}

A Hilbert space structure, that is an inner product,
has been introduced here, along with a supposedly distinguished
kind of orthonormal basis of cluster functions.
However, it is unclear where the inner product comes from
or in precisely what way the $\Phi_I^\alpha$ are distinguished or are related to
the intuitive idea of an $n$-body interaction.
Even if the structure is judged reasonable, a deeper explanation is desirable
for added flexibility and a clearer idea of the possibility of alternatives.

Certainly, the cluster functions have an arbitrariness coming from the choice of the
$\varphi_k$, or of $\sigma_*$.
It might seem plausible that this arbitrariness is limited to $\alpha$, so
that the sum
\begin{equation}
\nonumber
\COp{I} f \defeq \sum_\alpha \inpr{\Phi_I^\alpha}{f} \Phi_I^\alpha,  
\end{equation}
actually depends only on the Hilbert space structure.
If so, it has intrinsic meaning and characterization, but what are they?
We claim that it should be considered the ``$I$-body'' part (as in two-body,
three-body, \dots) of $f$. This characterization will emerge
Section~\ref{sec:axioms-and-consequences}.

There is a generalization of the simple form of the conventional formalism
which has been considered in the literature\cite{Sanchez-93,Sanchez-10,Sanchez-17b}.
This is to replace the inner product (\ref{eq:conventional-inner-prod-0}) by
\begin{equation}
\nonumber 
\inpr{h}{g}_i = \sum_{x_i\in\mathsf{Sps}} p(x_i) h(x_i) g(x_i),
\end{equation}
where the weights $p(x_i)$ add to one, and therefore can be interpreted
as probabilities. This effectively makes the inner product a derived notion
based on the probability distribution $p$.
We will go even farther and not even start with a probability distribution
on configurations.

Any observable, energy ${\mathcal E}$ for instance, has an expansion
over the cluster functions as
\begin{equation}
  \label{eq:conventional-clstr-xpn}
{\mathcal E}(x) = \sum_{I,\alpha} \inpr{\Phi_I^\alpha}{f} \Phi_I^\alpha(x).  
\end{equation}
It is common to use the term \textit{cluster expansion} to refer to the
construction of an approximation
\begin{equation}
\label{eq:approx-clstr-xpn}
{\mathcal E} \approx \sum_{\Small\, I, \alpha} J_I^\alpha \Phi_I^\alpha
\end{equation}
based on the energies of a limited sample of configurations.
In the case of energy, the $J_I^\alpha$'s are called effective
cluster interactions (ECIs).
It is reasonable to call the right-hand side of either
(\ref{eq:conventional-clstr-xpn}) or (\ref{eq:approx-clstr-xpn})
a cluster expansion.
However, the conventional approach to determining the model parameters (ECIs)
takes no particular notice of the fact that the $\Phi_I^\alpha$'s
are orthonormal with respect to a presumably relevant inner product.
The methods are the same as would be deployed for expansion in a generic
set of functions.
Thus, the conventional modeling procedures are not
hardly part of cluster expansion formalism proper. 
This paper takes the perspective that the Hilbert space structure of
$\Fns$ has implications for modeling that should be taken seriously.
Section~\ref{sec:modeling} shows how. Modeling done that way can
be considered a proper part of the cluster expansion formalism.

%%%%%%%%%%%%%%%%
\section{Notation and nomenclature}\label{sec:notation}

\begin{table}[h]
\begin{tabular}{ll}
$i,i',j,j',\ldots$ & site \\
$\Sps{i}$ & set of possible species at site $i$ \\
$\Site$  & set of all sites \\
$I,J,K,\ldots$ & subset of $\Site$ (cluster) \\
$x,x',y,y',\ldots$ & [partial] configuration ($x_i\in\Sps{i}$) \\
$\Cfg$ & set of total configurations \\
$\Cfg(I)$ & set of partial configurations defined on $I$ \\
$f,g,\ldots$ & observable \\
$\Arr{\Cfg}{\Fns}{\Cmplxs}$ & space of observables \\
% $\Fns$ & space $\Cfg \rightarrow \Cmplxs$ of observables \\
$\Arr{\Cfg(I)}{\Fns(I)}{\Cmplxs}$  & functions supported on $I$ \\
%$\Fns(I)$ & functions supported on $I$, $\Cfg(I)\rightarrow \Cmplxs$ \\
$\supp$ & support \\
  $\Arr{\Fns}{\EOp{I}}{\Fns(I)}$ & best-approximation operator \\
  % $\Fns \rightarrow \Fns(I)$ \\
  $\COp{I}$ & cluster operator \\
  $\Clust(I)$ & cluster subspace $\ran \COp{I}$  \\
  $\ran {\mathsf T}$ & range of operator ${\mathsf T}$  \\
  $\ker {\mathsf T}$ & subspace mapped to zero by ${\mathsf T}$ \\
%  $\Phi_I^\alpha$ & conventional cluster functions  \\
  calligraphic ($\mathcal S$) &  subspace of $\Fns$ \\
  sans-serif ($\prS$) & linear operator (usually projector) \\
  $\Arr{A}{\phi}{B}$ & $A$ is domain, $B$ codomain of function $\phi$ \\
  $x \mapsto f(x)$ & function defined by action on argument \\
  $J \subset I$ & strict subset: $J\subseteq I$ and $J\neq I$ \\
  $\bigoplus$ & direct sum (possibly orthogonal) \\
  $\|\phantom{f}\|$ & Hilbert space norm, $\|f\|^2 = \inpr{f}{f}$ \\
  $\Span A$ & linear span of set of vectors $A$ \\
  $\square$ & end of proof
\end{tabular}
\caption{Special notations and notational conventions.}
  \label{tab:notation-1}
\end{table}
Before starting the development, we establish some terminology, notations,
conventions, many of which are recorded in Table~\ref{tab:notation-1}
for convenient reference. 

$\Site$ is a set of {\it sites}.
Ordinarily, these are atomic sites in a crystal structure,
but until Section~\ref{sec:symmetry}, no spatial aspect is involved.
A {\it configuration} $x$ assigns to each site $i$ a {\it species} $x_i$
from the species set $\Sps{i}$, so distinct sites may have distinct
sets of allowed species.
These are typically simply chemical species, but we only need
$\Sps{i}$ to be a finite set; it may include information such as charge state,
for example. $\Cfg$ is the set of all configurations.
Properties of the system are real or complex functions, e.g. $f$, of configuration.
We will often call these \textit{observables} for brevity,
and to avoid overworking the word \textit{function}.
The set $\Fns$ of all observables is a vector space.
Because we will be very interested in which particular configuration
variables an observable depends on, some elaboration on these simple ideas is needed.
For a set $I$ of sites, a {\it partial configuration}
$x\in \Cfg(I)$ assigns species only to sites in $I$; we also say $x$ is
{\it supported} on $I$. For a complete configuration, $x\in\Cfg$, its restriction
to $I$ is denoted $x_I$, generalizing the notation $x_i$.
An $x$ or $y$ without adornment is a complete configuration
unless context indicates otherwise.
Thus, we have maps $\Cfg \rightarrow \Cfg(I)$ given by restriction.
Dually, there are natural linear maps $\Fns(I) \hookrightarrow \Fns$ going the
other way, from the space of functions of partial configurations supported on $I$
(that is, $\Fns(I)$) into $\Fns$. Namely, from $f\in\Fns(I)$, we obtain
the function $x\mapsto f(x_I)$ in $\Fns$. Since this is injective (one-to-one),
we might as well consider $\Fns(I)$ to be a subspace of $\Fns$.
With that identification, $f$ in $\Fns(I)$ is also in $\Fns(J)$ for any superset
$J$ of $I$. If $f\in\Fns(I)$, we say $f$ is {\it supported} on $I$, and {\it the} support
of $f$, $\supp f$, is the smallest such $I$.
That is, $\supp f$ is the set of sites whose species $f$ really depends on.

%%%%%%%%%%%%%%% 
\section{Rough ideas}\label{sec:from-scratch}

This Section begins construction of a new cluster expansion formalism
based on a simple notion of {\it best-approximation} which will be axiomatized
in the next section.

%%%%%%%%%%%%%%% 
\subsection{Effective and reduced entities}
\label{sec:effective-reduced}

{\it Reduced} or {\it effective} descriptions, formulations, quantities,
Hamiltonians, actions, theories, and so forth
are a venerable way to deal with complex systems in physical science.
Roughly, the idea is to eliminate dependence on some independent variables,
in such a way that the resulting simplified picture of the physical situation
is as accurate as possible within the resulting limited level of detail. 
For instance, classical fluid dynamics describes the motion of fluids on the basis
of only macroscopic degrees of freedom, atomic physics usually ignores the internal structure
of atomic nuclei. Renormalization group methods are among the most sophisticated
exemplars of this general philosophy.

We propose to base the cluster expansion method on this effective/reduced idea.
Begin with a configuration function $f \in \Fns$. Potentially, $f(\x)$ depends
on all components of $x$; we seek an effective/reduced description of $f$,
to be denoted $\EOp{I}f$,
which has lost its dependence on variables with indices outside $I$.
The intuitive content of ``effective/reduced description'' in this context
is perhaps better captured by the term {\it best approximation}, i.e.,
$\EOp{I} f$ is to be the best --- in some sense --- possible approximation to $f$
which is supported in $I$.
This is still extremely vague.
To make it precise, in the next Section we shall write down conditions,
that is, axioms,
that the operators $\Arr{\Fns}{\EOp{I}}{\Fns(I)}$ should satisfy,
not as a logical inevitability, but as the most plausible and reasonable
filling out of our fuzzy notion of best approximation. 

%%%%%%%%%%%%
\subsection{A heuristic shortcut}

A more heuristic analysis may be a helpful warm-up.
Return to the basic question: does ${\mathcal E}(x)$,
energy as a function of configuration, have a natural decomposition
into building blocks, absent a mechanistic theory?
Write $\COp{I}{\mathcal E}$ for the hoped-for building block
associated with $I$. This must be a function only of the subconfiguration
in $I$, $\COp{I}{\mathcal E}\in \Fns(I)$, and the total energy must be
the sum of all the building blocks, that is
\begin{equation}
  \nonumber
{\mathcal E}(x) = \sum_I (\COp{I} {\mathcal E})(x).
\end{equation}
Of course, this is nowhere near enough to determine the $\COp{I}{\mathcal E}$'s.
So, we make a stronger demand that 
\begin{equation}
\nonumber
\EOp{I} {\mathcal E}
\defeq \text{``energy in } I\text{''}
= \sum_{J\subseteq I} \COp{I} {\mathcal E},
\end{equation}
the energy \textit{in} $I$ is the sum of all the building blocks in $I$.
At first sight, this seems a completely bad move.
Asking for the energy \textit{in} $I$ is asking for a mechanistic
theory behind the scenes. Even if we had such, we could not expect it
to cleanly assign parts of the energy to specific sets of sites.
However, we can rephrase that in a more informational/observational
kind of way.
First, replace ``energy in $I$'' by
``what the subconfiguration in $I$ tells us about the energy''.
This still does not make sense, nor is it even a quantity of energy.
With another little shift, we reach the idea of interpreting
$\EOp{I} {\mathcal E}(x)$ as the \textit{conditional expectation}
of the total energy, given the configuration in $I$.
This does make sense. The good news is that, given all the $\EOp{I}{\mathcal E}$,
the preceding display uniquely determines the $\COp{I}{\mathcal E}$.
The bad news is that a probability distribution over configurations must
be brought in. On reflection, though, that seems inevitable.
$\COp{I}{\mathcal E}$ could hardly be expected to have nothing to do
with the environment of $I$. We can, however, reduce it to knowing what that
environment is like on average.

The key point in the above argument is to define the cluster components
$\COp{I}{\mathcal E}$ somewhat indirectly, in terms of the $\EOp{I}{\mathcal E}$,
because these are easier to get a solid grip on.
In the following formal development we avoid a direct invocation of probability
by starting from very explicit requirements on the $\EOp{I}$ operators.
That we wind up in the same place anyway underlines the naturality
of the probabilistic setting.

%%%%%%%%%%%%%%%%%%%%%%%
\section{Axioms for best-approximation projectors and
  consequences}\label{sec:axioms-and-consequences}

Section \ref{sec:axioms} gives the system of axioms \Ax{A}--\Ax{D}.
Section \ref{sec:cluster-components} introduces cluster components as a derived concept.
Section \ref{sec:defending-axioms} argues informally for
accepting them as the only plausible precisification of the idea of
effective/reduced functions.
Section \ref{sec:probability} brings out the implicit probability and
Hilbert space structures.

\subsection{Axioms}\label{sec:axioms}

We postulate a system of operators
$\Arr{\Fns}{\EOp{I}}{\Fns(I)}$, one for each set of sites $I$.
However, for reasons that become clear soon, we omit the subscript from
$\EOp{\varnothing}$, writing it as simply $\EOp{}$.
These operators are subject to the following axioms.
In Section \ref{sec:independence} we will take up a secondary axiom of independent sites.
\begin{enumerate}
\item[\Ax{A}.]
  $\Arr{\Fns}{\EOp{I}}{\Fns(I)}$ is a linear projection {\em onto} $\Fns(I)$.
\item[\Ax{B}.]
$f \ge 0 \text{ and } f\not\equiv 0 \; \Rightarrow\; \EOp{} f > 0$.
\item[\Ax{C}.]
$\EOp{} \EOp{I} = \EOp{}$.
\item[\Ax{D}.]
$g \in \Fns(I) \;\Rightarrow\; \EOp{I} (gf)  = g \, \EOp{I}f$.
\end{enumerate}

Since we have no notion of orthogonality yet,
saying that $\EOp{I}$ is a {\it projection} merely means that applying it twice
in succession is the same as applying it once:
$\EOp{I}\EOp{I} = \EOp{I}\circ\EOp{I} = \EOp{I}$.
However, we will soon see that they really are orthogonal projectors.
Generally, we omit the composition symbol $\circ$,
except in cases of ambiguity. 

%%%%%%%%%%%%%%%%%%%%%%%%%%%

\subsection{cluster components and decomposition}
\label{sec:cluster-components}

The idea of the following definition is that we want a single set of
building blocks to build all of the $\EOp{I} f$.
In the case of energy, these building blocks should be somehow describable as
two-body potentials, three-body potentials, and so forth.
\begin{defn}  \label{def:cluster-decomposition}
The {\em cluster operators}
\begin{equation}
  \label{eq:COp}
\Arr{\Fns}{\COp{I}}{\Fns(I)}
\end{equation}
are implicitly defined by
\begin{equation}
\label{eq:COp-def}
\EOp{I} = \sum_{J\subseteq I} \COp{J}.
\end{equation}
Applied to a function $f\in\Fns$, these produce the
{\em cluster components} $\COp{I} f \in \Fns(I)$.
Collectively, the cluster components comprise the
{\em cluster decomposition} of $f$.
\end{defn}

Anticipating that our axioms will imply that $\EOp{\Site}$ is the identity
on $\Fns$, notice that taking $I\equiv \Site$ in (\ref{eq:COp-def}) yields 
$f = \sum_{J} \COp{J} f$, shows that the appellation \textit{decomposition} is
appropriate.

Def. \ref{def:cluster-decomposition} is a legitimate definition because it
determines the cluster operators (equivalently, cluster components of arbitrary $f$)
recursively:
First, $\COp{\varnothing} f$ is determined outright, since $\varnothing$ has no
proper subsets, hence (\ref{eq:COp-def}) gives
$\COp{\varnothing} f = \EOp{\varnothing} f$.
Now, to recurse on the subset relation, assume that $\COp{J} f$ is determined for every
{\em proper} subset $J\subset I$.
Then, simple rearrangement of the putative definition gives 
\begin{equation}
\label{eq:COp-f}
\COp{I} f = \EOp{I} f - \sum_{J\subset I} \COp{J} f.
\end{equation}
Since everything on the right-hand side is known, $\COp{I} f$ is therefore determined
by the equation.
Inductive arguments of this sort will appear several more times in the development.

In fact, one can write down a formula for the solution of this recursion,
using the notation
\begin{equation}
  \varepsilon(I) = (-1)^{|I|}  =
  \begin{cases}
    \phantom{-}1 & \text{ if } I \text{ has even size } \\
    -1 & \text{ if } I \text{ has odd size }.
  \end{cases}
\end{equation}
\begin{prop}[M\"obius inversion]
\begin{equation}
    \label{eq:C-by-Mobius}
    \COp{J} = \varepsilon(J) \sum_{K\subseteq J} \varepsilon(K) \, \EOp{K}.
  \end{equation}
\end{prop}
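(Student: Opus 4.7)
The plan is to \emph{verify} that the proposed expression satisfies the defining recursion (\ref{eq:COp-def}) and then invoke the uniqueness already established just after Def.~\ref{def:cluster-decomposition}: the relation $\EOp{I} = \sum_{J\subseteq I} \COp{J}$ determines every $\COp{I}$ uniquely by induction on $|I|$, so any family of operators satisfying it must coincide with the true cluster operators.

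Accordingly, denote the right-hand side of (\ref{eq:C-by-Mobius}) provisionally by $\widetilde{\COp{J}}$, and compute $\sum_{J\subseteq I}\widetilde{\COp{J}}$. After substituting and swapping the order of summation, the expression becomes
\begin{equation*}
\sum_{J\subseteq I}\varepsilon(J)\sum_{K\subseteq J}\varepsilon(K)\EOp{K}
=\sum_{K\subseteq I}\varepsilon(K)\EOp{K}\!\!\sum_{J:\,K\subseteq J\subseteq I}\!\!\varepsilon(J).
\end{equation*}
The key step is to parametrize the inner sum by writing $J = K \cup L$ with $L\subseteq I\setminus K$ disjoint from $K$, so that $\varepsilon(J)=\varepsilon(K)\varepsilon(L)$. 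The inner sum then factors as $\varepsilon(K)\sum_{L\subseteq I\setminus K}\varepsilon(L)$, and the standard identity $\sum_{L\subseteq S}(-1)^{|L|}=\delta_{S,\varnothing}$ forces this to vanish unless $K=I$. Only the $K=I$ term survives, contributing $\varepsilon(I)^{2}\EOp{I}=\EOp{I}$, which is exactly what was required.

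There is no genuine obstacle; the only thing to be careful about is the sign bookkeeping and the disjoint-union reparametrization of $J$. Note also that this proof does not require any of the axioms \Ax{A}--\Ax{D}: it is a purely combinatorial identity on the Boolean lattice of subsets of $\Site$, and in fact establishes Möbius inversion for any family of operators (or elements of any abelian group) indexed by subsets and linked by a relation of the form (\ref{eq:COp-def}).
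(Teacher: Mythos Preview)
Your proof is correct and is essentially the same as the paper's. The paper defers to Appendix~\ref{app:Mobius}, where it proves the general M\"obius duality $g=\Mop f \Leftrightarrow \hat\varepsilon f = \Mop(\hat\varepsilon g)$ by exactly the computation you carry out: swap the order of summation and evaluate the inner alternating sum $\sum_{J:K\subseteq J\subseteq I}\varepsilon(J)$ (Lemma~\ref{claim:mobius-1}). Your disjoint-union reparametrization $J=K\cup L$ is just a slightly different packaging of the same identity, and your framing as ``verify the candidate satisfies the defining recursion, then invoke uniqueness'' is equivalent to the paper's ``$(\hat\varepsilon\Mop)^2=\Id$''.
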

This is an example of M\"obius inversion\cite{van-Lint+Wilson}.
The Proposition follows from (\ref{eq:Mobius-duality}) in Appendix \ref{app:Mobius},
where a self-contained exposition is given.

%%%%%%%%%%%%%%%%%%%%%%%%%%%

\subsection{Defending the axioms}\label{sec:defending-axioms}

%\subsubsection{Probability}
%\label{sec:probability}

Now we suggest some informal arguments for axioms \hbox{\Ax{A}--\Ax{D}}.
This is where we probe our intuition about what to see what are the
crucial features of the idea of {\it best approximation}.
The axioms are, however, the official rules and there is no more sneaking
intuition in by the back door.
Of course, we may later decide that we need to try a different axiomatization.
That will not happen in this paper, although the following section will
supplement \Ax{A}--\Ax{D} with a secondary independence axiom.

Begin with \Ax{A}, which is equivalent to: 
(i) the range of $\EOp{I}$ is in $\Fns(I)$,
(ii) $\EOp{I} g = g$ for every $g \in \Fns(I)$, and
(iii) $\EOp{I}$ is a linear operator. We defer (iii).
The argument for (i) and (ii) is as follows.
Suppose $f\in\Fns(I)$. Then, $\EOp{I}f$ means to
formally introduce additional variables that $f$ does not actually depend on
(one might be the position of Pluto), and then forget them again.
It would be absurd for that to result in anything except $f$ itself.

Now, (ii) above means that
for any constant $c$, \hbox{$\EOp{I} cf = c\EOp{I} f$},
and $\EOp{I}(f+g) = \EOp{I} f + \EOp{I} g$.
Since there are certainly effective/reduced situations
where this fails, such as nontrivial renormalization grou calculations,
this is not as obvious.
However, even the scalar multiple relation fails for these situations,
while it is pretty evident that we want that.
To be concrete, if the energy for every complete configuration is doubled, then
every effective/reduced energy should surely double as well.

This takes care of axiom \Ax{A}. Note a consequence for $\EOp{}$.
\Ax{A} implies that if $f$ is a constant $c$, then the effective/reduced $f$
depending on no variables is simply $f$ itself.
With linearity, that is the same as $\EOp{} 1 = 1$.

We argue for \Ax{B} by noting that it is equivalent to a continuity condition,
namely, 
\begin{equation}
\label{eq:continuity}
|g(\x)-f(\x)| < \epsilon \text{ for every } x \;\Rightarrow\;
|\EOp{} g - \EOp{} f| < \epsilon.
\end{equation}
It is easy to see that this implies \Ax{B}. Simply take $g$ constantly equal to the
average of the minimum and maximum values attained by $f$.
In the other direction, suppose $g < f + \epsilon$.
By \Ax{B} and \Ax{A}, $\EOp{} g < \EOp{} f + \epsilon$.
Now use $|g-f| = \max(g-f,f-g)$.

Suppose we arrive at an effective/reduced function by eliminating
some variables, and then we eliminate some more.
We might expect that the result is the same as if we had elimated all of
them at once. This would be $J\subseteq I \;\Rightarrow\; \EOp{J}\EOp{I} = \EOp{J}$.
\Ax{C} is a weak version of this, for just $J=\varnothing$.

Finally, we turn to \Ax{D}, which is the most awkward to argue for.
If $f$ is an interesting function of configuration, say energy, then it is not
so difficult to form some idea of what \Ax{A}, \Ax{B}, and \Ax{C} are saying.
\Ax{D}, however, contains a second function, $g$.
To motivate it, we start with a special case.
For $A\in\mathsf{Sps}$, define $\chi_A^i(\x)$ to be one if ${\x}_i\in A$, zero otherwise.
Now, consider $\chi_A^i f$.
The factor $\chi_A^i$ acts as a {\em switch}, turning $f$ {\em on}
only when $\sigma_i$ is in $A$.
When the switch is on, it is just as though the factor $\chi_A^i$ were absent.
For $I$ containing $i$, we take a configuration $\sigma$ with $\sigma_i\in A$,
and ask ourselves what $(\EOp{I} \tilde{f})(\sigma)$ should be.
It is supposed to be some sort of effective/reduced description of $\chi_A^i {f}$
using the limited data $\sigma_I$. But this data contains definitive proof
that the switch is on. In such case, we should agree that the presence of the
switch is irrelevant, that is,
\hbox{$\EOp{I} (\chi_A^i f) = \EOp{I} f = \chi_A^i \EOp{I} f$}
at configurations with \hbox{$\sigma_i\in A$}.
Accepting that, since $1-\chi_A^i$ is also a switch, we also have
\hbox{$\EOp{I} ((1-\chi_A^i) f) = (1-\chi_A^i) \EOp{I} f$}
when \hbox{$\sigma_i\not\in A$}.
By linearity (axiom \Ax{A}),
\hbox{$\EOp{I} (\chi_A^i f) = \chi_A^i \EOp{I} f$} without restriction.
Finally, since any function in $\Fns(I)$ can be expressed as a
sum of products of switches, \Ax{D} follows.

The system of axioms \Ax{A}--\Ax{D} might appear to have considerable freedom
in the $\EOp{I}$. That is illusory.
This subsection shows that $\EOp{}$ (that is, $\EOp{\varnothing}$)
alone determines a Hilbert space structure on $\Fns$, and the $\EOp{I}$
are then forced to be orthogonal projections onto the subspaces $\Fns(I)$.
This provides a more concrete structure to work with in the following
and is a step toward recovering the conventional formulation.

%%%%%%%%%%%%%%%%%
\subsection{Probabilistic interpretation}\label{sec:probability}

As we have already telegraphed, the best-approximation idea fixed by
\Ax{A}--\Ax{D} turns out to be probability in disguise.
This subsection carefully develops this rephrasing:
\begin{prop}
  \label{prop:prob-interp}
  Axioms \Ax{A}--\Ax{D} are equivalent to
  \newline
 1. $\Expct$ is a (full support) probabilistic expectation operator 
  \newline
 2. $\EOp{I}$ is the corresponding conditional expectation operator
  given the subalgebra $\Fns(I)$ of $\Fns$.
\end{prop}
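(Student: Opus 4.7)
The plan is to prove both directions of the equivalence. The direction $(1,2) \Rightarrow \Ax{A}$--$\Ax{D}$ is the standard property checklist for conditional expectation in finite discrete probability: linearity and idempotence on $\Fns(I)$ yield \Ax{A}; strict positivity of expectation against a full-support measure yields \Ax{B}; the tower property $\Expct[\Expct[f\,|\,\Fns(I)]] = \Expct[f]$ is \Ax{C}; and the ``pull-out'' identity $\Expct[gf\,|\,\Fns(I)] = g\,\Expct[f\,|\,\Fns(I)]$ for $g\in\Fns(I)$ is \Ax{D}. Each reduces to a one-line verification and I would dispatch them in turn.

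For the harder direction, I first observe that $\Cfg(\varnothing)$ has a single element (the empty assignment), so $\Fns(\varnothing)\cong \Cmplxs$ and $\EOp{}f$ may be identified with a scalar. Define $\Arr{\Fns}{\Expct}{\Cmplxs}$ by $\Expct[f]\defeq \EOp{}f$. By \Ax{A}, $\Expct$ is linear and $\Expct[1]=1$ (since $1\in\Fns(\varnothing)$ forces $\EOp{}1=1$); by \Ax{B} it is strictly positive on nonzero nonnegative functions. Setting $P(x)\defeq \Expct[\delta_x]$, where $\delta_x$ is the indicator of the configuration $x\in\Cfg$, and expanding any $f$ as $f=\sum_x f(x)\delta_x$ yields $\Expct[f]=\sum_x P(x) f(x)$. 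Applying \Ax{B} to each $\delta_x$ gives $P(x)>0$, and $\sum_x P(x) = \Expct[1]=1$, so $P$ is a bona fide full-support probability.

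With the expectation in hand, I identify $\EOp{I}$ with conditional expectation via the standard characterization: the conditional expectation given $\Fns(I)$ is the unique operator $\Fns\to\Fns(I)$ satisfying $\Expct[g\cdot\EOp{I}f] = \Expct[gf]$ for every $g\in\Fns(I)$. Axiom \Ax{A} places $\EOp{I}f$ in $\Fns(I)$; the defining identity follows from the short computation
\begin{equation*}
\Expct[g\,\EOp{I}f] = \EOp{}(g\,\EOp{I}f) \stackrel{\Ax{D}}{=} \EOp{}\bigl(\EOp{I}(gf)\bigr) \stackrel{\Ax{C}}{=} \EOp{}(gf) = \Expct[gf],
\end{equation*}
with uniqueness a one-liner from full-support positivity applied to the difference of two candidates.

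The main obstacle is conceptual rather than computational: one must select the right operator-level characterization of conditional expectation so that \Ax{C} and \Ax{D} slot in cleanly as the abstract ingredients. Once that characterization is fixed, the fact that everything is finite-dimensional and discrete keeps all measure-theoretic subtleties at bay, and the entire argument collapses to the displayed identity together with the verification that $P$ is a full-support probability.
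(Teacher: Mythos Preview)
Your proof is correct and follows essentially the same route as the paper: extract the full-support probability from \Ax{A} and \Ax{B} applied to $\EOp{\varnothing}$, then use \Ax{C} and \Ax{D} in tandem to pin down $\EOp{I}$ as conditional expectation. The only cosmetic difference is that the paper packages the key step as self-adjointness of $\EOp{I}$ with respect to the induced inner product (its Lemma on orthogonal projections), whereas you verify the weak characterization $\Expct[g\,\EOp{I}f]=\Expct[gf]$ directly; these are the same statement once one unwinds the definitions.
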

We also take up the Hilbert space structure induced on $\Fns$ by $\EOp{}$
in Section~\ref{sec:P-Hilbert-space}, and use it heavily from that point on.
Probability notions are not necessary for that, but allow us to
talk about it in a second language.
The probability distributions we obtain here are {formal}.
Partly for this reason, we somewhat downplay explicitly probabilistic language.
In a concrete setting one may have already one or more relevant probability
distributions, e.g., Gibbs distributions with varying temperature and
chemical potentials. Introducing a cluster expansion in such a situation,
the corresponding probability distribution may not correspond to anything
``real''.
However, pragmatically, $\EOp{}$
is misguided if it does not provide a sensible \textit{reference distribution}.
A systematic development of general probability theory prioritizing expectation,
as we do, is given by Whittle\cite{Whittle}.

%%%%%%%%%%%%%%%%%%%%
\subsubsection{Basic probability}

We recall some probability concepts in preparation for rephrasing
the axiom system according to Prop.~\ref{prop:prob-interp}.

Over a discrete set such as $\Cfg$, a probability distribution is defined
simply by assigning a $P(x)\ge 0$ to each $x\in\Cfg$ in such a way that that
$\sum_{x\in\Cfg} P(x) = 1$.
If \hbox{$P(x) \neq 0$} for every $x$, we say $P$ has full support. This is
the case we are interested in.
We overload the notation so that
$P(A) = \sum_{x\in A} P(x)$ is the probability of a subset $A$ of $\Cfg$,
an \textit{event} in probability parlance.
We entend the notation once more to {\it expectations} of functions on the
outcome space $\Cfg$ (\textit{random variables}):
\begin{equation}
  \label{eq:expectation-def}
P(f) = \sum_{x\in\Cfg} P(x) f(x).  
\end{equation}
This is unorthodox notation.
A more usual notation for the expectation of $f$
would be something like $\EOp{}(f)$.
We will be able to revert to that notation as soon as the identification
promised by Prop.~\ref{prop:prob-interp} is established.
The abuse of notation committed in (\ref{eq:expectation-def}) is justified by the fact that
the probability of a set or single configuration equals the expectation of
its characteristic function. Defining
\begin{equation}
  \label{eq:is-it}
  \isit{x}(y) =
  \begin{cases}
    1 & y\equiv x \\
    0 & \text{otherwise}
  \end{cases}
\end{equation}
we can link the two uses of `$P$' in a different way:
\begin{equation}
P(x) = P(\isit{x}).
\end{equation}
This simple idea of reading an element of $\Cfg$ as an element of $\Fns$
will be surprisingly useful.

%%%%%%%%%%%%%%%%%%% 
\subsubsection{\texorpdfstring{$\EOp{}$ is an expectation operator}{}}

The first step in demonstrating Prop.~\ref{prop:prob-interp} is now simple.
$\EOp{}$ obeying \Ax{A} and \Ax{B} determines a probability $P$ on $\Cfg$ and vice versa.
Using the abuse of language already introduced, we write simply,
albeit somewhat cryptically,
\begin{equation}
\EOp{} = P.  
\end{equation}
That is, given $\EOp{}$, defining $P(x) = \EOp{}\isit{x}$ gives 
a full-support probability distribution:
$P(x) > 0$ by \Ax{B}, while \hbox{$\sum_x P(x) = \EOp{}(\sum \isit{x}) = \EOp{} 1 = 1$}
by \Ax{A}.
Going the other direction, if $\EOp{}f$ is defined via (\ref{eq:expectation-def}),
it satisfies \Ax{A} because $P(f)$ is a constant and $P(1)=1$, and
$\EOp{}$ satisfies \Ax{B} because \hbox{$P(x) > 0$}.
What \Ax{C} and \Ax{D} say in the case $I\equiv \varnothing$ is already subsumed by \Ax{A}.

%%%%%%%%%%%%%%%%%
\subsubsection{\texorpdfstring{$\EOp{}$ induces a Hilbert space structure}{}}
\label{sec:P-Hilbert-space}

The probability $P$ also naturally equips $\Fns$ with a Hilbert space structure,
through the definition
\begin{equation}
  \label{eq:inner-prod-def}
\inpr{f}{g} \defeq \sum_{x\in\Cfg} P(x) f(x)^* g(x) = P(f^* g)  
\end{equation}
of a positive-definite inner product. Here, $f(x)^*$ is the complex conjugate
of $f(x)$. We are allowing complex functions, although that is not necessary. 
Many probabilistic concepts can be recast in Hilbert space terms through the
use of this inner product. The expectation of $f$ is $P(f) = \inpr{1}{f}$,
where $1$ denotes the constant function with value $1$,
while its variance is
\hbox{$\Var f = P(|f|^2) - |P(f)|^2 = \|f\|^2 - |\ilinpr{1}{f}|^2$}.

Finally, consider a vector subspace ${\cal V}$ of $\Fns$ containing the
constants, and the orthogonal projection $\prV$ onto $\cal V$.
Then, $\prV f$ is a best approximation to $f$ by a random variable from $\cal V$
in the following probabilistic sense:
the residual $f-\prV f$ has mean zero, and
as $g$ ranges over $\cal V$, $\Var (f-g)$ is uniquely minimized at $g = \prV f$.
Suppose further that $\cal V$ is not only a vector subspace but also an algebra,
which means it is closed under multiplication ($g,h\in{\cal V} \Rightarrow gh\in{\cal V}$).
With a finite outcome space $\Cfg$, this implies that $\cal V$ contains every function
that can be made from its elements by any means (see discussion of $\sigma$ in
Sec.~\ref{sec:sketch-encoding}).
In such a case, probabilists call
$\prV f$ the {\it conditional expectation} of $f$ given $\cal V$.

%%%%%%%%%%%%%%%%%
\subsubsection{\texorpdfstring{The $\EOp{I}$ are orthogonal (self-adjoint) projections}{}}

Proof of Prop.~\ref{prop:prob-interp} is finished by showing that the orthogonal
projection onto $\Fns(I)$, and nothing else, can play the role of
$\EOp{I}$ in \Ax{A}--\Ax{D}.
\begin{lem}
  \label{lem:E-orthoproj}
  Axioms \Ax{A}--\Ax{D} imply that $\EOp{I}$ is an {\em orthogonal},
  that is, self-adjoint, projection onto $\Fns(I)$.  
\end{lem}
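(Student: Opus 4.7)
The plan is to avoid checking the self-adjointness identity $\inpr{\EOp{I} f}{g} = \inpr{f}{\EOp{I} g}$ symbol-by-symbol and instead proceed geometrically: show that $\ker \EOp{I}$ is orthogonal to $\Fns(I)$ in the inner product $\inpr{f}{g} = \EOp{}(f^* g)$ from Sec.~\ref{sec:P-Hilbert-space}. Axiom \Ax{A} already supplies the algebraic direct-sum decomposition $\Fns = \Fns(I) \oplus \ker \EOp{I}$ that any projection onto $\Fns(I)$ automatically induces (from $\EOp{I}^2 = \EOp{I}$ one has $v = \EOp{I} v + (v-\EOp{I} v)$ with the second summand in $\ker \EOp{I}$, and the two summands meet only at $0$). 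So once orthogonality of these two summands is in hand, $\EOp{I}$ must coincide with the orthogonal projector onto $\Fns(I)$, and self-adjointness follows automatically.

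The orthogonality itself is a short chain. Pick $g \in \Fns(I)$ and $f$ with $\EOp{I} f = 0$, and compute
\begin{equation*}
\inpr{g}{f} = \EOp{}(g^* f) = \EOp{} \EOp{I}(g^* f) = \EOp{}(g^* \, \EOp{I} f) = 0.
\end{equation*}
The second equality is \Ax{C}, the third is \Ax{D} (applicable because $\Fns(I)$ is closed under pointwise complex conjugation, so $g^* \in \Fns(I)$ can be pulled outside $\EOp{I}$), and the last uses $\EOp{I} f = 0$. Hence $\ker \EOp{I} \perp \Fns(I)$, which combined with the direct-sum decomposition identifies $\EOp{I}$ as the orthogonal projection onto $\Fns(I)$, and so as self-adjoint.

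The principal obstacle, to the extent there is one, is conceptual rather than technical: recognizing the role of each axiom. Axiom \Ax{B} does not enter the orthogonality computation above at all, but it operates in the background --- without full-support positivity of $\EOp{}$, the sesquilinear form $\EOp{}(f^* g)$ would fail to be positive-definite and the phrase ``orthogonal projection'' would carry no content. The only minor subtlety is conjugation-closure of $\Fns(I)$, which is essentially definitional: a function depending only on $x_I$ has a pointwise complex conjugate that still depends only on $x_I$.
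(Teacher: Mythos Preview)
Your proof is correct. The axiom chain you use---\Ax{C} followed by \Ax{D}---is exactly the same as in the paper's argument, but the framing differs. The paper verifies the self-adjointness identity $\inpr{\EOp{I} g}{f} = \inpr{g}{\EOp{I} f}$ head-on, reducing both sides to the common expression $\EOp{}(\EOp{I} f \cdot \EOp{I} g^*)$. You instead specialize to $g\in\Fns(I)$ and $f\in\ker\EOp{I}$, establishing $\ker\EOp{I}\perp\Fns(I)$ directly, and then invoke the standard fact that a projection with orthogonal range and kernel is the orthogonal projection. Your route is arguably a little cleaner: it sidesteps the (minor) question of whether $(\EOp{I} g)^* = \EOp{I}(g^*)$, which the paper's computation tacitly uses, and it makes the geometric content---orthogonality of the induced direct-sum decomposition---explicit rather than leaving it as a corollary. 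The paper's route has the virtue of producing the full symmetric identity in one stroke. Either way, the essential mechanism is the same two-axiom move.
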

\begin{proof}
Since \Ax{A} already says $\EOp{I}$ is a linear projection, only
self-adjointness need be shown.
On the one hand,
\begin{align}
       \inpr{\EOp{I} g}{f}
& \stackrel{(\ref{eq:inner-prod-def})}{=}  \Expct (f {\EOp{I} g^*})
         \stackrel{{\textbf C}}{=}  \Expct \EOp{I} (f {\EOp{I} g^*})
         \nonumber \\ &
 \stackrel{{\textbf D}}{=}  \Expct ( \EOp{I}f \, \EOp{I} g^* ).
\label{eq:E-orthoproj-0}
\end{align}
On the other hand,
\begin{equation}
\nonumber
\inpr{g}{\EOp{I} f} = \inpr{\EOp{I} f}{g}^*
\stackrel{(\ref{eq:E-orthoproj-0})}{=} \Expct ( \EOp{I}f \, \EOp{I} g^* ).
\end{equation}
\end{proof}

Going the other way, it needs to be checked that orthogonal projections
actually satisfy all the axioms.
All but \Ax{D} are immediate. Thus, assume $g\in\Fns(I)$.
Then, since $\Fns(I)$ is closed under products (it is an algebra),
$\EOp{I}(gf)$ and $\EOp{I}f$ are also in $\Fns(I)$, and to show they are equal
only requires testing their inner products with an arbitrary $h\in\Fns(I)$:
\begin{align}
  \inpr{h}{\EOp{I}(gf)}
& {=} \inpr{\EOp{I} h}{gf}
\stackrel{h\in\Fns(I)}{=} \inpr{h}{gf}
\stackrel{\ref{eq:inner-prod-def}}{=} \inpr{g^*h}{f}
                          \nonumber \\
& \stackrel{g^*h\in\Fns(I)}{=} \inpr{g^*h}{\EOp{I} f}
 \stackrel{\ref{eq:inner-prod-def}}{=} \inpr{h}{g \EOp{I} f}.
    \nonumber
\end{align}
The first equality following by orthogonality of $\EOp{I}$.

%%%%%%%%%%%%%%%%%%%%%%% 
\subsubsection{Marginalization}\label{sec:marginalization}

The last main probability concept to bring in is {\it marginal distributions}.
Consider a system of projections $\EOp{I}$ satisfying \hbox{\Ax{A}--\Ax{D}}.
Fix a set $L$ of sites.
Restricted to $\Fns(L)$ --- understood as functions on $\Cfg(L)$ ---
the $\EOp{I}$ with $I\subseteq L$ still satisfy the axioms.
Therefore, just as in the previous few subsections, there is an induced Hilbert space
structure on $\Fns(L)$ and corresponding probability distribution
$\Arr{\Cfg(L)}{P_L}{(0,1]}$. This is a {\it marginal} distribution.
For $f,g\in\Fns(L)$,
\begin{equation}
\EOp{} (f^* g) = \inpr{f}{g} = \sum_{x\in\Cfg(L)} P_L(x) f(x)^* g(x).  
\end{equation}
Evaluate the marginal probability of
the partial configuration $x\in\Cfg(L)$ as
\begin{equation}
  P_L(x) = \EOp{} \isit{x} = \sum_{y\in\Cfg} P(y) \isit{x}(y)
  = \sum_{\substack{y\in\Cfg \\ y_I=x}} P(y).
\end{equation}
Like any other observable in $\Fns(L)$,
$\isit{x}$ simply ignores species at the other sites
when interpreted in $\Fns$:
\begin{equation}
\isit{x}(y) =
  \begin{cases}
    1 &
    \forall i\in \dom x, \, y_i = x_i
    \\
    0 & \text{otherwise}
  \end{cases}
\end{equation}
Essentially, this identifies the partial configuration $x$ with the
{\em set} of configurations which match it.
We could write
\begin{equation}
P_I(x) = P(x),  
\end{equation}
where on the right-hand side, `$x$' denotes that set and $P(x)$ is just
its ordinary probability under $P$.
Thus, the $I$ subscript above is dispensible.
However, in the case where $L$ is a single site, we will do so, and
with the special notation
\begin{equation}
  \label{eq:p_i}
p_i \equiv P_{\{i\}}.
\end{equation}
Suppose, for example, that Cu (copper) is in $\Sps{i}$. Then
\begin{equation}
  p_i(\text{Cu})
  = P_{\{i\}}(\text{Cu at } i)  
  = P(\text{Cu at } i).  
\end{equation}
The point here is that ``Cu at $i$'' is an acceptable partial configuration
(equivalently, set of configurations), so $P(\text{Cu at } i)$ makes sense.
But $P(\text{Cu})$ does not. Somewhere, it must be recorded that the magnesium
ion under discussion is at site $i$.
In this special case, we choose to retain it as a subscript on $p_i$
(lower-case, note).
Then, we can consider
$p_i$ as simply a probability distribution over the {\em set} $\Sps{i}$.

%%%%%%%%%%%%%%%%%%% 
\subsubsection{Internalizing evaluation, or, hats}

The operators $\EOp{I}$ fit very naturally into the Hilbert space structure of $\Fns$.
We can further rephrase even evaluation of an observable at a configuration as an
inner product. This will be very useful.
In Sec.~\ref{sec:modeling}, we represent models of $f$ based on
a sampling in terms of the sampled configurations, cast as observables.
Also, in the presence of symmetry, one wants to average a configuration
over a group action. The most reasonable way to make sense of that is to
realize the configuration as a special sort of observable.

Actually, the first step has already been taken in associating the
observable $\isit{x}$ with configuration $x$. The only problem is
that the normalization is not quite right for our purposes.
Therefore, we introduce an alternative normalization and notation:
\begin{equation}
  \label{eq:hat}
\hat{x} \defeq \Prob(x)^{-1} \isit{x}.  
\end{equation}
Here, $x$ may be a partial configuration.
For a complete configuration, we have immediately
\begin{equation}
  \inpr{\hat{x}}{f}
  = P(x)^{-1} \sum_{y\in\Cfg} P(y) \isit{x}(y) f(y)   
  = f(y),   
\end{equation}
since $\isit{x}$ reduces the sum to a single term.
The generalization to partial configurations follows by
the technique of relativization employed in Sec.~\ref{sec:marginalization}.
\begin{prop}\label{prop:hats}
For a partial configuration \hbox{$x_I\in\Cfg(I)$},
\begin{equation}
  \label{eq:EI-as-inner-prod}
(\EOp{I}f)(x_I) = \inpr{\wh{x_I}}{f}.  
\end{equation}
Equivalently, for $x\in\Cfg$,
\begin{equation}
\label{eq:hat-conditional}
\EOp{I}\hat{x} = {\wh{x_I}}.
\end{equation}
In particular, in the important special case of a complete configuration $x$,
\begin{equation}
  \nonumber
  % \label{eq:evaluation-as-inner-prod}
f(x) = \inpr{\hat{x}}{f}.  
\end{equation}
\end{prop}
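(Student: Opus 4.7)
The plan is to bootstrap from the complete-configuration identity $f(x) = \inpr{\hat{x}}{f}$ to the partial-configuration case using self-adjointness of $\EOp{I}$ (Lemma~\ref{lem:E-orthoproj}) together with the marginalization setup of Sec.~\ref{sec:marginalization}. The equivalent form $\EOp{I}\hat{x} = \wh{x_I}$ will then fall out by specializing $f = \hat{x}$ and unpacking the definitions.

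The complete-configuration identity is essentially given in the paragraph preceding the statement: inserting definition (\ref{eq:hat}) into (\ref{eq:inner-prod-def}) collapses the sum to the single term on which $\isit{x}$ is nonzero, and the two factors of $P(x)$ cancel.

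For the partial-configuration case, I would first observe that $\wh{x_I} \in \Fns(I)$ by construction, so Axiom \Ax{A} gives $\EOp{I}\wh{x_I} = \wh{x_I}$. Combined with self-adjointness of $\EOp{I}$ this yields
\[
\inpr{\wh{x_I}}{f} \;=\; \inpr{\EOp{I}\wh{x_I}}{f} \;=\; \inpr{\wh{x_I}}{\EOp{I}f}.
\]
Both arguments on the right now lie in $\Fns(I)$, so by the marginalization identity of Sec.~\ref{sec:marginalization} this inner product equals the one computed with $P_I$ on $\Cfg(I)$. But inside $\Fns(I)$ with the probability $P_I$, the element $\wh{x_I}$ plays exactly the role of the hat of a \emph{complete} configuration (since $x_I$ is complete as an element of $\Cfg(I)$, and the restrictions of the $\EOp{J}$ for $J\subseteq I$ still satisfy \Ax{A}--\Ax{D}). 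Applying the already-established complete-configuration identity inside $\Fns(I)$ to the observable $\EOp{I}f$ then gives $\inpr{\wh{x_I}}{\EOp{I}f} = (\EOp{I}f)(x_I)$, proving (\ref{eq:EI-as-inner-prod}).

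For the equivalent form (\ref{eq:hat-conditional}), I would apply (\ref{eq:EI-as-inner-prod}) with $f = \hat{x}$: for any $y_I \in \Cfg(I)$,
\[
(\EOp{I}\hat{x})(y_I) \;=\; \inpr{\wh{y_I}}{\hat{x}}.
\]
Unpacking both hats via (\ref{eq:hat}), the product $\isit{y_I}(z)\isit{x}(z)$ inside the sum over $z\in\Cfg$ vanishes unless simultaneously $z = x$ and $y_I = x_I$, in which case the remaining arithmetic yields $P_I(x_I)^{-1}$, matching $\wh{x_I}(y_I)$ pointwise. The only delicate step anywhere in the argument is the transfer of an inner product from $\Fns$ to $\Fns(I)$ via $P_I$, but that is exactly what Sec.~\ref{sec:marginalization} supplies, so no real obstacle remains.
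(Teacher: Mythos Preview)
Your proposal is correct and follows essentially the same route the paper sketches: establish the complete-configuration identity by direct computation, then reduce the partial case to it via the relativization of Sec.~\ref{sec:marginalization}. Your explicit use of self-adjointness of $\EOp{I}$ to pass from $\inpr{\wh{x_I}}{f}$ to $\inpr{\wh{x_I}}{\EOp{I}f}$ is the natural way to carry out that reduction, and the subsequent verification of (\ref{eq:hat-conditional}) by direct expansion is straightforward and accurate.
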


%%%%%%%%%%%%%%%%% 
\section{Independent sites}\label{sec:independence}

With the aid of an additional condition, \IS, this section 
makes contact with the conventional cluster functions,
among other things. 
The conventional formalism corresponds to a probability distribution $P$
under which distinct sites are independent.
This notion is explained in Section~\ref{sec:Indep-axiom}.
Proposition~\ref{prop:independence} in Section~\ref{sec:Indep-disguises}
gives a large number of conditions equivalent to \IS.
Some of these conditions are interesting because they can be used
to argue in the spirit of Section~\ref{sec:defending-axioms} for
the naturality of \IS. This is done in Section~\ref{sec:Indep-defense}.
Finally, Section~\ref{sec:Phis} shows that the cluster functions of
the conventional formalism give ONBs when \IS\ holds.
Section~\ref{sec:is-it-projected} constructs the cluster components
of configurations which can provide a useful and convenient alternative
to the cluster functions.

\subsection{A secondary axiom}\label{sec:Indep-axiom}

We need to define the relevant concept before stating the axiom.
\begin{defn}[Independence]\label{def:independence}
Subalgebras ${\mathcal A}$ and ${\mathcal B}$ of $\Fns$
 are {\it independent} (more precisely: $\EOp{}$-independent)
if $\inpr{f}{g} = \EOp{}(fg) = \inpr{f}{1}\inpr{1}{g}$ for all
$f\in{\mathcal A}$, $g\in{\mathcal B}$.
\end{defn}
Here, then, is the site-independence axiom:
\begin{enumerate}
\item[\IS.]
  \label{axiom:IS}
  $\Fns(I)$ and $\Fns(J)$ are independent whenever $I$ and $J$ are disjoint.
\end{enumerate}

\subsection{Elementary products}

Elementary products are a class of observables which are especially
important and useful when \IS\ holds. 
\begin{defn}\label{def:factorizable}
The observable $f\in\Fns$ is an {\it elementary product}
if it can be written as a product
\begin{equation}
  \label{eq:elementary-function-def}
f = \prod_i f_i, \qquad f_i\in\Fns(i)
\end{equation}
of observables, each depending on only a single configuration component.
Equivalently, the set of elementary products is
\begin{equation}
  \label{eq:elementary-functions}
\prod_i \Fns(i).
\end{equation}
\end{defn}

It is extremely important that $\Fns$ is linearly spanned by the 
elementary products: $\Fns = \Span \prod \Fns(i)$.
This means that every $f\in\Fns$ is expressible as a linear
combination of elementary products, but such expression is {\em not}
always unique.
A basis, generally many, can always be extracted from a spanning set. 
The $\isit{}$ observables (\ref{eq:is-it}), for example, provide
an {\em orthogonal} basis of elementary products.
$\isit{x}$ is an elementary product since it can be written as
$\prod_{i} \isit{x_{\{i\}}}$, or, in a more explicit notation,
\begin{equation}
\isit{x}(y) = \prod_{i} \delta(x_i,y_i).  
\end{equation}
Since an arbitrary $f\in\Fns$ can be written
\begin{equation}
  \label{eq:f-expanded-as-isits}
f = \sum_{x\in\Cfg} f(x) \isit{x},
\end{equation}
the $\isit{}$ observables span $\Fns$. Finally, the definition
\hbox{$\inpr{\isit{x}}{\isit{y}} = \sum P(x')\isit{x}(x')\isit{y}(x')$}
shows that they are orthogonal, for any $P$, and therefore form an
orthogonal basis.

An elementary product contains exactly one factor from 
$\Fns(i)$ for each $i$, however some of these factors could simply be constants.
This motivates consideration of the \textit{sets} (not vector spaces) of observables
\begin{equation}
\nonumber
\prod_{i\in I} \Fns^0(i).
\end{equation}
$\Fns^0(i)$ is defined to be the orthogonal complement of constants in  $\Fns(i)$:
\begin{equation}
  \label{eq:F0-def}
  \Fns(i) = \Cmplxs \oplus \Fns^0(i).
\end{equation}
To see why these are important, consider an elementary product
$f = \prod f_i$, and write $f_i$ as $\EOp{} f_i + (f_i - \EOp{} f_i)$,
a sum of a constant and a term in $\Fns^0(i)$.
Expand the product as
\begin{align}
  \label{eq:elementary-expansion}
  f
  &  = \prod_i [\EOp{} f_i + (f_i - \EOp{} f_i)
    ]
    \nonumber \\
&  = \sum_I \left(\prod_{j\not\in I} \EOp{} f_j \right)
  \prod_{i\in I}(f_i - \EOp{} f_i).
\end{align}
Thus, any elementary product decomposes into a sum of scalar multiples
of observables in $\prod_{i\in I} \Fns^0(i)$ for varying $I$.
when \IS\ holds, these latter are the cluster components $\COp{I} f$,
as will be shown.

%%%%%%%%%%%%%%
\subsection{Many equivalents of independent sites}\label{sec:Indep-disguises}

Proposition~\ref{prop:independence} below lists a large number of
conditions equivalent to the proposed axiom \IS. Proof is deferred to
Section~\ref{sec:proof-of-prop}.
Some of the conditions are simple translations from one mode of expression to
another, as for example with (b) and (c).
Some are important for practical computations,
such as (b), (c), (h), and especially (i).
The last of these leads directly to useful formulas and connects
to the conventional formulation.
Others seem of more theoretical interest, such as (f) and (g).
The arguments for axiomatic status of \IS\ are based on (j) and (g).
\begin{prop}\label{prop:independence}
  The following are equivalent.
  \newline
In these, $I$ and $J$ implicitly run over all subsets of $\Site$,
  and $x$ over all configurations. $\prod f_i$ is a generic elementary product.
  
  \begin{enumerate}[label=\textnormal{(\alph*) }]
\item \IS
\item
  With one-site marginal distribution $p_i$ as in (\ref{eq:p_i}),
  \begin{equation}
  P(x) = \prod_{i} p_i(x_i).
\end{equation}
\item 
$\inpr{\prod f_i}{\prod g_j} = \prod_i \inpr{f_i}{g_i}$
\item 
$\EOp{I} \left( {\prod f_j} \right)
= \left(\prod_{j\not\in I} \EOp{}f_j\right) \prod_{i\in I} f_i$
\item
  $I\cap J = \varnothing \;\Rightarrow\;
  \EOp{I}\EOp{J} = \EOp{}$
\item $\EOp{I}\EOp{J} = \EOp{I\cap J}$ 
\item $J \not\subseteq I \;\Rightarrow\; \EOp{I}\COp{J} = 0$
\item $ \COp{I}\COp{J}= 
  \begin{cases}
    \COp{I} & I=J \\
    0 & I\neq J
  \end{cases}
  $
  %%%%%
\item
  $\COp{I} \left(\prod f_k\right)
  = \left(\prod_{j\not\in I}\EOp{}f_j\right) \prod_{i\in I} (f_i-\EOp{}f_i)$
  %%%%%
\item
  $\Clust({I}) = \Span \prod_{i\in I}\Fns^0(i)$, where $\Fns^0(i)$
 (\ref{eq:F0-def}) is the orthogonal complement of the constants in $\Fns(i)$.
\item
  $\Fns = \bigoplus_I \Clust(I)$,
  this being an ordinary direct sum
  (that is, $\Clust(I)\perp \Clust(J)$ is {\em not} assumed).
  \end{enumerate}
\end{prop}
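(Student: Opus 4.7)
The plan is a hub-and-spoke: establish (a) $\Leftrightarrow$ (b) $\Leftrightarrow$ (c) $\Leftrightarrow$ (d) as a probabilistic backbone, and cascade the remaining conditions from there. For the backbone, since $\Fns(I)$ is linearly spanned by elementary products supported on $I$, bilinearity of the inner product reduces \IS\ directly to (c), giving (a) $\Leftrightarrow$ (c). For (b) $\Leftrightarrow$ (c): one direction is immediate on substituting $P(x) = \prod_i p_i(x_i)$ into the sum defining $\inpr{\cdot}{\cdot}$, and the other follows on applying (c) to the single-configuration indicator $\isit{x} = \prod_i \isit{x_i}$, which recovers $P(x) = \prod_i p_i(x_i)$. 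For (c) $\Leftrightarrow$ (d), setting $I = \varnothing$ in (d) yields $\EOp{}(\prod_j f_j) = \prod_j \EOp{}f_j$, and combining this with the definition $\inpr{f}{g} = \EOp{}(f^*g)$ produces (c); conversely, since $\EOp{I}$ is the orthogonal projection onto $\Fns(I)$, (d) can be verified by testing inner products of both sides with an arbitrary elementary $h \in \Fns(I)$ and invoking (c).

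The operator-form equivalences (d), (e), (f) then fall out. Applying $\EOp{J}$ and then $\EOp{I}$ to an elementary product and invoking (d) twice gives (f); specializing to $I \cap J = \varnothing$ gives (e). Conversely, (e) together with the consequence $\EOp{I}\EOp{K} = \EOp{K}$ of \Ax{A} (valid whenever $K \subseteq I$) recovers (d) by splitting $\prod_j f_j = \prod_{i\in I} f_i \cdot \prod_{j\notin I}f_j$ and handling the two factors separately.

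The cluster-component conditions (g)--(k) then flow from (d) via M\"obius inversion. Substituting (d) into the formula \eqref{eq:C-by-Mobius} for $\COp{J}$ applied to $\prod_i f_i$ and reorganizing the double sum, the terms collapse into $\prod_{j\notin J}\EOp{}f_j \cdot \prod_{i\in J}(f_i - \EOp{}f_i)$ via the binomial-style identity $\sum_{L\subseteq J}(-1)^{|J\setminus L|}\prod_{i\in L}f_i\prod_{j\in J\setminus L}\EOp{}f_j = \prod_{i\in J}(f_i - \EOp{}f_i)$, establishing (i). Varying each $f_i$ in $\Fns(i)$ while forcing $\EOp{}f_j = 0$ for $j \notin I$ shows $\prod_{i\in I}\Fns^0(i) \subseteq \Clust(I)$, and the reverse inclusion follows from (i) and the fact that elementary products span $\Fns$; this is (j). Since $\Fns^0(i) \perp \Cmplxs$ by definition, (c) forces $\Clust(I) \perp \Clust(J)$ whenever $I \neq J$, and combined with the idempotence $\COp{I}^2 = \COp{I}$ apparent from (i) applied twice, this yields (h). Then (g) follows from (h) and $\EOp{I} = \sum_{K\subseteq I}\COp{K}$, because no $K$ in that sum equals $J$ when $J \not\subseteq I$. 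Finally, (k) follows from $f = \EOp{\Site}f = \sum_I \COp{I}f$ together with (h).

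The main obstacle is the combinatorial collapse at the heart of (i): the M\"obius double sum $\sum_{K\subseteq J}\varepsilon(K)\EOp{K}(\prod_i f_i)$ must be reorganized and recognized as the claimed product form, which is where the structural content of site independence is really used. Beyond that single identity, the remainder of the proof is bookkeeping inside the Hilbert-space framework already set up in Section~\ref{sec:axioms-and-consequences}, together with elementary use of \Ax{A}.
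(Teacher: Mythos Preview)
Your plan proves only half of what is needed for (g)--(k). You show that each of these follows from the backbone (a)--(d), but you never show that any of them implies the backbone; as written, the argument establishes (a)--(f) pairwise equivalent and (a)--(f)$\Rightarrow$(g),(h),(i),(j),(k), leaving the Proposition's claim of full equivalence unfinished. Some of the missing reverse implications are routine: (h)$\Rightarrow$(f) by expanding $\EOp{I}\EOp{J}$ via \eqref{eq:COp-def}, and (i)$\Rightarrow$(d) by summing (i) over subsets of $I$ and collapsing the binomial in reverse. But (j)$\Rightarrow$(anything) and (k)$\Rightarrow$(anything) require genuine ideas. The paper closes these with an auxiliary product measure $P^\times(x):=\prod_i p_i(x_i)$: since $P^\times$ shares one-site marginals with $P$, the subspaces $\Fns^0(i)$ and hence $\Span\prod_{i\in I}\Fns^0(i)$ are the same for both; but $P^\times$ satisfies (b) by construction, so the already-established implication (b)$\Rightarrow$(c) applied to $P^\times$ makes those subspaces $P^\times$-orthogonal, hence linearly independent --- an inner-product-free conclusion that yields (j)$\Rightarrow$(k). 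Then (k)$\Rightarrow$(h) is obtained by an induction on \eqref{eq:COp-f} showing that the direct-sum projectors coincide with the $\COp{I}$.

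A secondary issue: your route to (h) via ``orthogonality of the $\Clust(I)$ plus $\COp{I}^2=\COp{I}$'' does not actually force $\COp{I}\COp{J}=0$, because at that stage $\COp{I}$ is only known to be idempotent, not to be the \emph{orthogonal} projector onto its range; oblique projections with orthogonal ranges need not annihilate one another. The fix is available directly from (i): for $g=\prod_{j\in J}g_j$ with each $g_j\in\Fns^0(j)$, extend by $g_k=1$ for $k\notin J$ and apply (i) to compute $\COp{I}g$; a vanishing factor $\EOp{}g_k=0$ (some $k\in J\setminus I$) or $g_i-\EOp{}g_i=0$ (some $i\in I\setminus J$) kills the product whenever $I\neq J$. (Also, in your argument for (j), ``forcing $\EOp{}f_j=0$ for $j\notin I$'' makes the prefactor in (i) vanish; you want $f_j=1$ for $j\notin I$ and $f_i\in\Fns^0(i)$ for $i\in I$.)
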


\subsection{Why take \IS\ as an axiom?}\label{sec:Indep-defense}

If the $\EOp{}$-operator we are dealing with happens to satisfy
\IS, computational simplifications ensue, as one would expect from
conditions (b) or (c) of the Proposition.
This is reason to have a preference for such $\EOp{}$-operators.
However, just as we did for axioms
\Ax{A}--\Ax{D} in Section~\ref{sec:defending-axioms},
we want to find the best arguments toward the conclusion that
our semi-choate notion of approximation structure really requires
that the condition \Ax{IS} be imposed.

The first argument is that the operators $\EOp{}$ should {\em always} lead to a
{\em simplification}, in the sense of reducing the support of functions.
That is, $\supp \EOp{I} f \subseteq \supp f$, whatever $f$ and $I$ may be.
This translates, almost immediately to $\EOp{I}\EOp{J} = \EOp{I\cap J}$,
condition (f) of Prop.~\ref{prop:independence}.
While it has some surface plausibility, this argument is not really
very compelling. After all, some effective/reduced descriptions of the
sorts invoked as motivation in Sec.~\ref{sec:effective-reduced} introduce
degrees of freedom which do not even exist in the original description.

Better arguments are to be found by turning attention to cluster components.
We did not try to axiomatize the cluster operators directly because it is too
difficult to see {\it a priori} what a full set of such axioms should be.
Let us return to the idea that the cluster operators $\COp{I}$ should extract 
{the building blocks} of an observable. Inherent in that idea is some degree of
{\it atomicity}, {\it stability}, and {\it uniqueness}.
Hence, it might well be reasonable to demand that if we start with
a building block $\COp{I} f$, for some particular $I$ and $f$, and
extract {\em its} building blocks, we simply get it back and nothing else.
That would mean $\COp{I}\COp{I} f = \COp{I} f$, while
$\COp{J}\COp{I} f = 0$ for $J\neq I$. In other words, Prop.~\ref{prop:independence}(h).

Another argument of similar flavor is based on uniqueness.
Def.~\ref{def:cluster-decomposition} of the cluster operators provide a unique
expression of any observable as a sum of building blocks in the cluster subspaces $\Clust(I)$.
However, that does not mean that there are not others. Indeed, that is exactly what
the direct sum condition (j) of the Proposition imposes. 
If it does not hold, then there are other expressions
(not picked out by the $\COp{I}$ operators) of an observable as such a sum.

%%%%%%%%%%%%%%%%%%%%%5
%%%%%%%%%%%%%%
\subsection{Proof of Proposition \ref{prop:independence}}
\label{sec:proof-of-prop}

The equivalence of (a) -- (e) follows
by recursion from the following Lemma, elementary proof of which
is omitted.
\begin{lem}\label{lem:independence}
  For given $I$, $J$, the following conditions are equivalent.
  \begin{enumerate}[label=\textnormal{(\arabic*) }]
  \item $I$ and $J$ are independent
  \item $\EOp{I}\EOp{J} = \EOp{}$
  \item
    $\EOp{I} (f_I f_J) = (\EOp{} f_J) f_I$
    whenever $\supp f_I \subseteq I$ and $\supp f_J \subseteq J$.
  \item $P(x_{I\cup J}) = P(x_I)P(x_J)$
  \item
$ \inpr{f_If_J}{g_Ig_J} = \inpr{f_I}{g_I}\inpr{f_J}{g_J} $
whenever \newline
\hbox{ $\supp f_I \subseteq I$} and so forth.
\end{enumerate}
\end{lem}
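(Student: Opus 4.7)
My plan is to prove the equivalences in a hub-and-spoke pattern with condition (1) as the hub, giving short direct arguments from (1) to each of (2)--(5) and back. The key enabling facts are that each $\Fns(K)$ is an algebra (closed under multiplication) and that the inner product is $\inpr{h}{k} = \EOp{}(h^*k)$, so the operator, inner-product, and probability formulations can be shuttled between one another using algebra closure plus axioms \Ax{A}, \Ax{C}, \Ax{D}.

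The pair $(1)\Leftrightarrow(5)$ is essentially definitional: expanding $\inpr{f_If_J}{g_Ig_J} = \EOp{}((f_I^*g_I)(f_J^*g_J))$ and using algebra closure to place $f_I^*g_I$ in $\Fns(I)$ and $f_J^*g_J$ in $\Fns(J)$ lets independence factor it as $\inpr{f_I}{g_I}\inpr{f_J}{g_J}$; in the reverse direction, setting $g_I=f_J=1$ in (5) recovers Def.~\ref{def:independence}. For $(1)\Leftrightarrow(3)$, I would identify $f_I\EOp{}f_J$ as the orthogonal projection of $f_If_J$ onto $\Fns(I)$: for $h\in\Fns(I)$, $\inpr{h}{f_I(f_J-\EOp{}f_J)} = \inpr{f_I^*h}{f_J-\EOp{}f_J}$ vanishes because $f_I^*h\in\Fns(I)$ (algebra closure), $f_J-\EOp{}f_J\in\Fns(J)$ has mean zero, and (1) is exactly the statement that such cross terms vanish. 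The reverse direction follows by applying $\EOp{}$ to (3) and using axiom \Ax{C}.

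For $(3)\Leftrightarrow(2)$, the special case $f_I=1$ in (3) says $\EOp{I}f_J = \EOp{}f_J$ whenever $f_J\in\Fns(J)$; combined with $\EOp{J}h\in\Fns(J)$ and axiom \Ax{C} this gives $\EOp{I}\EOp{J}h = \EOp{}\EOp{J}h = \EOp{}h$, that is (2). Conversely, (2) applied to $f_J$ (using axiom \Ax{A} to see $\EOp{J}f_J = f_J$) yields $\EOp{I}f_J = \EOp{}f_J$, and axiom \Ax{D} extends this to $\EOp{I}(f_If_J) = f_I\EOp{I}f_J = f_I\EOp{}f_J$. For $(1)\Leftrightarrow(4)$ (with $I\cap J = \varnothing$, the case of interest), the product identity $\isit{x_{I\cup J}} = \isit{x_I}\isit{x_J}$ makes (4) the $\EOp{}$-image of (1) applied to the indicator pair, and conversely, since the $\isit{x_I}$ span $\Fns(I)$ as $x_I$ ranges over $\Cfg(I)$, bilinearity lifts the indicator identity to all of $\Fns(I)\times\Fns(J)$.

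The main obstacle, to the extent there is one, is the orthogonal-projection step in $(1)\Rightarrow(3)$: it requires noticing that the algebra property of $\Fns(I)$ (closure of $f_I^*h$ inside $\Fns(I)$) is precisely what converts the abstract factorization of independence into the concrete identification of $f_I\EOp{}f_J$ with $\EOp{I}(f_If_J)$. Once that step is in hand, the remaining implications are routine manipulations using axioms \Ax{A}, \Ax{C}, \Ax{D} and the spanning/bilinearity arguments indicated above, consistent with the author's comment that the proof is elementary.
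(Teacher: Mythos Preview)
Your proposal is correct, and the paper actually omits the proof of this lemma entirely (the text reads ``elementary proof of which is omitted''), so there is nothing to compare against. Your hub-and-spoke organization with (1) at the center, driven by algebra closure of $\Fns(I)$ and the identity $\inpr{h}{k}=\EOp{}(h^*k)$, is a clean way to do it and fully consistent with the authors' remark that the argument is elementary; the one place you restrict to $I\cap J=\varnothing$, namely $(1)\Leftrightarrow(4)$, is appropriate since that is the only case used in the paper and the non-disjoint case is degenerate under the full-support hypothesis.
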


\medskip
\noindent (e) $\Leftrightarrow$ (f): Implication right-to-left is immediate.
For the reverse, it suffices to show
\hbox{$\supp f \subseteq J \implies \EOp{I} f = \EOp{I\cap J}$}.
Since products span, we may even suppose $f=f_{J\cap I}f_{J\setminus I}$, where
the factors are supported in $J\cap I$ and $J\setminus I$, respectively.
Thus, compute
\begin{equation}
  \nonumber
  \EOp{I} (f_{J\cap I}f_{J\setminus I})
\stackrel{ \Ax{D} }{=} f_{J\cap I}(\EOp{I} f_{J\setminus I})
\stackrel{(d)}{=} f_{J\cap I}(\EOp{} f_{J\setminus I}),
\end{equation}
and note that starting with $\EOp{I\cap J}$ instead of $\EOp{I}$ 
yields the same result by the same reasoning.

\medskip
\noindent (f) \implies (g):
\begin{equation}
  \nonumber
\EOp{I} \COp{J}
 = \EOp{I} \left(  \EOp{J} - \sum_{K\subset J} \COp{K} \right)
\stackrel{(e)}{=}  \EOp{J\cap I} - \sum_{K\subset J} \EOp{I} \COp{K}.
\end{equation}
Inductively, we assume the conclusion $\EOp{I}\COp{K}=0$ for $K\subset J$, so
\begin{align}
\sum_{K\subset J} \EOp{I} \COp{K}
&  \stackrel{i.h.}{=} \sum_{K\subset I\cap J} \EOp{I} \COp{K}
   = \EOp{I} \sum_{K\subset I\cap J} \COp{K}
 \nonumber \\
& = \EOp{I} \EOp{I\cap J}
 = \EOp{I\cap J}.
   \nonumber
\end{align}

\medskip
\noindent 

\medskip
\noindent (g) \implies (h):
By M\"{o}bius inversion formula (\ref{eq:C-by-Mobius})
\hbox{$J \not\subseteq I \implies \COp{I}\COp{J} = 0$} follows from (g).
We will appeal to this shortly.

Now, The cases $J\not\subseteq I$, $J \subset I$, and $J = I$
are mutually exclusive and exhaustive.
The first case is just the assumption. If $J\subset I$, then $I\not\subseteq J$, so
$\COp{J}\COp{I} = 0$. Taking the adjoint of each side of this equation yields
$\COp{I}\COp{J} = 0$. For the final case, consider
\begin{equation}
  \nonumber
  \COp{I}^2
  = \left( \EOp{I} - \sum_{J\subset I} \COp{J} \right) \COp{I}    
  =  \EOp{I}\COp{I} - \sum_{J\subset I} \COp{J} \COp{I}
\end{equation}
The first term in the final expression is $\COp{I}$, while in the sum,
each term $\COp{J}\COp{I}=0$ by the second case.

\medskip\noindent
(h)\implies (e):
Use (\ref{eq:COp-def}) to express $\EOp{I}$ and $\EOp{J}$ in terms of cluster operators.
Applying (g), only one term ($\EOp{}$) in the expanded product survives.

%%%%%%%%%%%%% 
\medskip\noindent
(d)\implies (i):
Use the definition \ref{def:cluster-decomposition} and induction.
$\EOp{I} \prod f_j = c \prod_{i\in I} f_i$ and
\begin{align}
\prod_{i\in I} f_i
& \stackrel{(\ref{eq:elementary-expansion})}{=} \prod_{i\in I}(f_i-\EOp{}f_i)+
\sum_{J\subset I} \left( \prod_{k\not\in J}\EOp{}f_k \right)
\prod_{j\in J}(f_j-\EOp{}f_j)
    \nonumber \\
&\stackrel{i.h.}{=} \prod_{i\in I}(f_i-\EOp{}f_i)+
\sum_{J\subset I} \COp{J} \prod_{i\in I} f_i.
    \nonumber 
\end{align}
The first uses the expansion in (\ref{eq:elementary-expansion}), 
and the second line follows by induction hypothesis. Now apply 
Def.~\ref{def:cluster-decomposition}.

\medskip\noindent
(i)\implies (j):
$\COp{I}$ maps an elementary product into
$\prod_{i\in I}\Fns^0(i)$, so
$\Clust(I)\subseteq \Span \prod_{i\in I}\Fns^0(i)$.
On the other hand, (h) also shows that every member of
$\prod_{i\in I}\Fns^0(i)$ is fixed by $\COp{I}$, hence is in $\Clust(I)$.

\medskip\noindent
(j)\implies (k):
Define
$P^\times(x) \defeq \prod_i p_i(x_i)$
We are not assuming that $P^\times$ is the same as $P$. They just have the
same one-site marginals. However, $P^\times$ provides a legitimate inner product
$\inpr{\phantom{f}}{\phantom{g}}_\times$ on $\Fns$, as in Section~\ref{sec:P-Hilbert-space},
and we may apply the already established equivalence
\hbox{(b) $\Leftrightarrow$ (d)},
showing that the $\Clust(I)$ are orthogonal under the new inner product.
Therefore, they are certainly linearly independent, a statement which does not
involve any inner product at all. Thus, $\Fns = \bigoplus \Clust(I)$.

\medskip
\noindent 
(k) \implies (h): %\newline
The tactic is to show that the projection operators onto the components
of the assumed direct sum,
temporarily denoted $\widetilde{\mathsf{C}}_I$,
satisfy the defining conditions of the cluster operators, and therefore
are equal to them.
This suffices because such a system of projections always obeys the equations
of (h).

Now, $\Fns(I) = \sum_{J\subseteq I} \Clust(J)$, but by assumption
the $\Clust(J)$ subspaces are linearly independent, hence
\hbox{$\Fns(I) = \bigoplus_{J\subseteq I} \Clust(J)$}.
Therefore $\EOp{I} f
= \widetilde{\mathsf{C}}_I f  
+ \sum_{J\subset I} \widetilde{\mathsf{C}}_{J} f$. By induction hypothesis,
$\widetilde{\mathsf{C}}_{J} f = {\COp{J}} f$ in the sum. 
Removing those tildes then shows that 
$\widetilde{\mathsf{C}}_{I} f = {\COp{I}} f$ by (\ref{eq:COp-f}).

%%%%%%%%%%%%%%%%%%%%%%
\subsection{Conventional cluster functions}\label{sec:Phis}

The so-called cluster functions of the conventional formalism
(or a minor generalization thereof) provide ONBs of the
cluster subspaces $\Clust(I)$ precisely when \IS\ holds.
With Prop.~\ref{prop:independence} in hand, the demonstration is
very straightforward.
For each site $i$, let $\{\varphi_i^\alpha\}$ be an ONB of $\Fns^0(i)$.
For each $I$, define
\begin{equation}
\Phi_I^{{\alpha}} \defeq \prod_{i\in I}\varphi_i^{\alpha_i}
\end{equation}
for every collection of indices \hbox{${\alpha} = (\alpha_i\,|\, i\in I)$}.
This construction is just like in Section~\ref{sec:critical-review} except that
we are not using the encoding function $\sigma_*$.
Then, the $\Phi_I^{{\alpha}}$ comprise an ONB of $\Clust(I)$.
This claim follows from Prop.~\ref{prop:independence} (c,k).
It is also clear from Prop.~\ref{prop:independence}, that we cannot have
such a factorized basis unless site independence holds.

%%%%%%%%%%%%%%%%%%%%
\subsection{Cluster components of a configuration}\label{sec:is-it-projected}

With \IS, tidy formulas are also available for the
cluster components of $\isit{}$ observables (\ref{eq:is-it}).
That such observables provide an orthogonal basis of $\Fns$
indicates the theoretical interest in such formulas, but are they
practical? In fact, as discussed in Section \ref{sec:modeling},
information about nontrivial observables generally comes from sampling.
In such circumstances, it is entirely natural and efficient to build
models/approximations of such observables by use of hat-functions.

Even without \IS, $\isit{x}$ factorizes as $\prod \isit{x_i}$.
With \IS, the probability $P(x)$ also factorizes, resulting in
\begin{equation}
\label{eq:hat-x-factorizes}
\hat{x} = \prod_i \wh{x_i}.  
\end{equation}
Here we use the abbreviation
\begin{equation}
  \label{eq:hat-x-i}
\wh{x_i} \equiv \wh{x_{\{i\}}},  
\end{equation}
$\{i\}$ being a set of sites with one member.

Now, according to Prop.~\ref{prop:independence} (d) and (i)
\begin{align}
  \EOp{I} \hat{x} &
                    = \wh{x_I}
                    = \prod_{i\in I} \wh{x_i},
\label{eq:E-x-hat}                      \\
\COp{I} \hat{x} &= \prod_{i\in I} (\wh{x_i}-1).    
\label{eq:CI-x-hat}                      
\end{align}
Formula (\ref{eq:CI-x-hat}) is very useful, because it provides practical means to
extract cluster components without need for conventional cluster functions.
For a more explicit formula, notice that
\begin{equation}
  \nonumber
\inpr{\wh{x_i}-1}{\wh{y_i}} =
  \begin{cases}
p_i(x_i)^{-1} - 1, & x_i=y_i \\    
 - 1, & \text{otherwise} 
  \end{cases}
\end{equation}
and therefore
\begin{equation}
\label{eq:CI-hat-ip}
(\COp{I}\hat{x})(y)
= (-1)^{|I|} \prod_{i\in I:y_i=x_i} \left( 1-\frac{1}{p_i(x_i)}\right).
\end{equation}

%%%%%%%%%%%%%%%%% 
\section{Taking stock}\label{sec:intermission}

This Section recapitulates the framework we have assembled and 
relates it to what follows.

\subsection{Assessing the framework}

We started with the problem of breaking arbitrary
observables into natural building blocks.
While the conventional cluster expansion offers a sort of building block,
the cluster functions $\Phi_I^\alpha$,
they depend on some arbitrary choices, and are not adapted to the observable of interest.
To improve on the situation, we sought general principles,
and wrote axioms for the best-approximation operators $\EOp{I}$.
Cluster operators $\COp{I}$ are obtained from those in a very natural and simple way
(Def.~\ref{def:cluster-decomposition}).
Our axioms turned out to be equivalent
to requesting a probability distribution over configurations, with the
$\EOp{I}$'s identified as conditional expectation operators.
Thus we succeeded in imposing a reasonable amount --- and a familiar sort ---
of structure. To obtain the sort of Hilbert space structure on $\Fns$
assumed by the conventional cluster expansion requires an
additional condition of independent sites. 
We suggested arguments that \IS\ should be considered an axiom, though the
case is not as clear as for \Ax{A}--\Ax{D}.

With that, we arrive at a Hilbert space structure which is the same as
presumed by conventional cluster expansion with a variable probability
distribution of species.
What, then, has been gained?
First, we have characterized cluster subspaces $\Clust(I)$ and cluster
components $\COp{I} f$ intrinsically.
This helps insulate even the conventional cluster formalism from the charge
of being meaningless curve-fitting.
At the same time, we have identified a generalization, obtained by dropping the
condition \IS. Should an alternative way of decomposing functions of configuration into
building blocks be proposed, the abstract formulation puts us in a better position
for determining whether it is essentially the same, and even what the prospects are for
the existence of such an alternative.

\subsection{On the rest of the paper}

As we move toward more applied concerns in the remainder of the paper,
we have a flexible set of tools for working with $\Fns$ as a Hilbert space:
the critical orthogonal direct sum decomposition
\hbox{$\Fns=\bigoplus\Clust(I)$} [Prop.~\ref{prop:independence}(k)],
and a means (\ref{eq:CI-hat-ip}) to implement the $\COp{I}$ projectors.
This puts us in a position to formulate algorithms
in an intrinsic manner, and possibly find alternative implementions which
are more efficient than the use of cluster-function ONBs.
The formula of Prop.~\ref{prop:independence}(i) for cluster components of elementary
functions, and especially (\ref{eq:CI-x-hat}) for hat-functions, hint at such possibilities.

The major application of cluster expansion ideas in materials science is to
construction of models of an unknown function on the basis of
sampling a relatively small selection of configurations.
Practical success requires that the function in question be
reasonably approximable using just a few cluster components.
Symmetry plays an important role in making this possible, and
that is the topic of Sections \ref{sec:symmetry}, where we also expand
the framework to include tensor-valued observables.
Section \ref{sec:modeling} then discusses the modeling itself.
Hilbert space geometry is central to the perspective offered there,
in contrast to the conventional formalism,
where it all but disappears.

%%%%%%%%%%%%%%%%%
\section{Symmetry}\label{sec:symmetry}

%%%%%%%%%%%%
\begin{table}[h]
  \centering
\setlength{\tabcolsep}{12pt}
\renewcommand{\arraystretch}{1.5}
\begin{tabular}{ll}
  $\Gp$ & a symmetry group of $\EOp{}$ \\
  $T,T',\ldots$ & element of $\Gp$, site permutation \\
  ${\mathcal V}$ & space of ``tensors'' \\
  $\Av$ &  group averaging operator \\ 
  $\rho$ & representation of $\Gp$ on ${\mathcal V}$ \\
  $ \stackrel{\Gp}{\sim}$ & ``is $\Gp$-related to'' \\
  $\Gp I$ & $\Gp$-orbit of $I$, namely $\{J\,|\, J\stackrel{\Gp}{\sim} I\}$ \\
  $\Gp_I$ & stabilizer subgroup of $I$ \\
$\ClSh$ & cluster-shapes ($\Gp$-orbits of clusters) \\
$\Clust[\mathcal{I}]$ & $\bigoplus \{\Clust(J)\,|\, J\in \mathcal{I} \}$ \\
%  $\Clust[\Gp I]$ & $\bigoplus \{\Clust(J)\,|\, J\stackrel{\Gp}{\sim} I\}$ \\
  $\COp{\mathcal{I}}$ & projector onto $\Clust[\mathcal{I}]$ \\
\end{tabular}
\caption{More notation. The notations for $\Gp$-equivalence, orbit, and
  stabilizer subgroup may be used with other classes of entities than sets of
  sites. 
  \label{tab:notation-2}}
\end{table}
%%%%%%%%%%%%

To this point, $\Site$ has been merely a {\em set} of sites
with no structure. In at least the usual applications, however, the sites are
ionic positions in some crystal structure. Symmetry considerations
therefore emerge, which are very important for the application of
cluster expansion methods. 

For many observables, such as energy, whenever $I$ and $J$ are related by a
space group operation of the parent crystal structure,
cluster component $\COp{I} f$ is the same function of partial configuration
on $I$ as $\COp{J} f$ is of partial configuration on $J$.
This greatly reduces the computational burden of finding approximations of
such cluster components.
For tensor observables, elastic, optical, piezoelectric properties, and so on,
the situation is more complicated since the \textit{values} of such observables
also transform nontrivially under the space group (rotations, in particular).
So far, we have dealt only with \textit{scalar-valued} observables,
namely those in $\Fns$. Beginning in this section, we widen the scope to include
\textit{tensor-valued} observables, treating the former as a special case
of the latter.

Thsection extends the framework from $\Fns$ to $\vals\otimes\Fns$, where
$\vals$ is a space of tensor values, and shows how symmetry can be
effectively exploited.
We have been using the term \textit{observable} in a very broad sense
which would include, for example, the $xx$ component of the dielectric
constant, or the density of chains of three adjacent Co ions along the
$x$ direction. Assuming the $x$-axis is not invariant under our symmetry
group $\Gp$, these kinds of observables are not of interest to us as
they involve an explicit breaking of the symmetry.
We are here really interested only in $\Gp$-\textit{equivariant}
observables. For scalar-valued observables, this is genuine invariance,
whereas for a tensor-valued observable, it means we get the same thing
if we rotate the tensor value of the observable or rotate the structure
and then evaluate the observable.
Proposition~\ref{prop:unfolding} shows how the subspace $\Av \vals\otimes\Fns$
of $\Gp$-equivariant $\vals$-valued observables can be ``compressed''.
For convenient reference, most of the new notations we introduce here
are listed in Table~\ref{tab:notation-2}.

%%%%%%%%%%%% 
\subsection{Permutation actions}\label{sec:permutations}

Let $T$ be a site permutation. We consider briefly the natural action
of $T$ on various types of entity. Permutations defined by space
group operations are most important. However, that restriction is
irrelevant to the preliminary considerations here.
By its nature, $T$ has an action on sites $i\mapsto Ti$ with inverse $T^{-1}$.
This extends simply to clusters as
$TI = \setof{Tj}{j\in I}$ and to configurations as $(Tx)_i = x_{T^{-1} i}$.
As here, we generally use a uniform and simple notation for the natural
action on any type of object, i.e., the specific action is determined by
context. The pattern for configurations,
which are functions on $\Site$, continues to higher types.
The action of $T$ on $f\in\Fns = (\Cfg \rightarrow \Cmplxs)$
is given by $(Tf)(x) = f(T^{-1} x)$.
For a map $\Fns\rightarrow\Fns$, such as $\EOp{I}$ or $\COp{I}$,
matters are a little different because $T$ has a natural action on
the target space as well as the domain:
$(T\COp{I})(f) = T(\COp{I}(T^{-1} f))$, or 
$(T\COp{I}) = T \circ \COp{I} \circ T^{-1}$.
If we do not indicate explicitly on which space each $T$ is acting,
composition symbols are sometimes needed to disambiguate.

%%%%%%%%%%%% 
\subsection{\texorpdfstring{Symmetries of $\EOp{}$}{}}\label{sec:E-symmetries}

Permutations are not much use to us unless they are symmetries of $\EOp{}$.
$T$ is a symmetry of $\EOp{}$ if \hbox{$T \circ\EOp{} = \EOp{}\circ T$}.
In fact, $T\circ\EOp{}$ is just $\EOp{}$ if we are dealing with scalar observables,
but the more general form is relevant to tensor observables which will be taken
up below. In the following basic result about symmetries of $\EOp{}$,
the equivalence of (a) and (b) is of interest, but normally we only care
about (c) and (d) as consequences of symmetry.
%%%%%%%%
\begin{prop}
  \label{prop:E-symmetries}
For a site permutation $T$, the following conditions are equivalent:
  \newline
  \textnormal{(a) }
  $T$ is a symmetry of $\Expct$
  \newline
  \textnormal{(b) }
  $T$ is a unitary operator on $\Fns$: \newline
  \hspace*{10mm} for all $f, g \in \Fns$, $\inpr{Tf}{Tg} = \inpr{f}{g}$
  \newline
  \textnormal{(c) }
  for every $I$,  $\EOp{TI} = T\circ\EOp{I} \circ T^{-1}$
  \newline
  \textnormal{(d) }
  for every $I$,  $\COp{TI} = T\circ\COp{I} \circ T^{-1}$
\end{prop}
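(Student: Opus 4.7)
The plan is to prove the equivalences via the chain (a)$\Leftrightarrow$(b), (a)$\Leftrightarrow$(c), (c)$\Leftrightarrow$(d). For (a)$\Leftrightarrow$(b), I would exploit the identity $\inpr{f}{g} = \Expct(f^* g)$ together with the fact that $T$ acts multiplicatively and commutes with complex conjugation, so $T(f^* g) = (Tf)^*(Tg)$. Since $\Expct$ outputs a constant and $T$ fixes constants pointwise, condition (a) amounts to $\Expct(Th) = \Expct(h)$ for every $h\in\Fns$. Applying this to $h = f^*g$ yields $\inpr{Tf}{Tg} = \inpr{f}{g}$, i.e., (b). Conversely, setting $f = 1$ (a $T$-fixed vector) in (b) recovers $\Expct(Tg) = \inpr{1}{Tg} = \inpr{T^{-1}1}{g} = \inpr{1}{g} = \Expct(g)$, hence (a).

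For (a)$\Rightarrow$(c), I would show that $T\circ\EOp{I}\circ T^{-1}$ is the orthogonal projection onto $\Fns(TI)$ and invoke uniqueness. That it projects onto $\Fns(TI)$ is a direct check: if $f\in\Fns(TI)$ then $T^{-1}f\in\Fns(I)$ is fixed by $\EOp{I}$, so $T\EOp{I}T^{-1}f = f$; conversely $\EOp{I}T^{-1}f\in\Fns(I)$ yields $T\EOp{I}T^{-1}f\in\Fns(TI)$. To see it is \emph{orthogonal}, invoke (b): $T$ is unitary ($T^* = T^{-1}$) and $\EOp{I}$ is self-adjoint by Lemma~\ref{lem:E-orthoproj}, so the composite is self-adjoint, hence the orthogonal projection onto $\Fns(TI)$, which is $\EOp{TI}$. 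For the reverse implication (c)$\Rightarrow$(a), I would just set $I = \varnothing$; since $T\varnothing = \varnothing$, the statement of (c) collapses to $\EOp{} = T\EOp{}T^{-1}$, i.e., (a).

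For (c)$\Leftrightarrow$(d), I would transfer the intertwining relation back and forth using the Möbius pair (\ref{eq:COp-def}) and (\ref{eq:C-by-Mobius}). The bijection $K\leftrightarrow TK$ between subsets of $I$ and subsets of $TI$ preserves cardinality, hence preserves the sign $\varepsilon$. Pulling conjugation by $T$ outside each sum after this change of summation variable converts the intertwining relation $\EOp{TK} = T\EOp{K}T^{-1}$ (for all $K$) into $\COp{TI} = T\COp{I}T^{-1}$ via (\ref{eq:C-by-Mobius}), and conversely via (\ref{eq:COp-def}).

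The main obstacle is the (a)$\Rightarrow$(c) step, because the only substantive use of the unitarity hypothesis lies in verifying that the conjugated projector is self-adjoint, not merely idempotent with the correct range. Once that observation is in place, uniqueness of orthogonal projections closes the argument, and the remaining equivalences are essentially formal bookkeeping with the Möbius formulas.
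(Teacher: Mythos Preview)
Your proof is correct and follows essentially the same route as the paper: a cycle through (a), (b), (c) using the identity $\inpr{f}{g}=\Expct(f^*g)$ and the unitarity of $T$, then (c)$\Leftrightarrow$(d) via the M\"obius pair. The only cosmetic difference is in the (b)$\Rightarrow$(c) step: the paper checks $T(\EOp{I}f)=\EOp{TI}(Tf)$ by testing inner products against arbitrary $g\in\Fns(TI)$, whereas you verify that $T\EOp{I}T^{-1}$ is a self-adjoint idempotent with range $\Fns(TI)$ and invoke uniqueness of orthogonal projections---the same ingredients, packaged slightly differently.
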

\begin{proof}
 \noindent (c) $\Rightarrow$ (a): $\EOp{} = \EOp{\varnothing}$.

  \noindent (a) $\Rightarrow$ (b):
   $\ilinpr{Tf}{Tg}
   = \EOp{}[(Tf)(Tg)]  
   = \EOp{}[T(fg)] $
   \newline \hspace*{52pt}
 $ \stackrel{\text{(a)}}{=} \EOp{}(fg) {=} \ilinpr{f}{g}$.

\noindent (b) $\Rightarrow$ (c):
  $T(\EOp{I}f)$ and $\EOp{TI}(Tf)$ are both in $\Fns(TI)$. Show they are equal by
  showing that they have the same inner product with an arbitrary test function
  $g\in\Fns(TI)$:
\begin{align}
\inpr{g}{T\EOp{I}f}    
& \stackrel{\text{(b)}}{=} \inpr{T^{-1}g}{\EOp{I}f}    
                          {=} \inpr{T^{-1}g}{f}
                          \nonumber \\  &
    \stackrel{\text{(b)}}{=} \inpr{g}{Tf}    
{=} \inpr{g}{\EOp{TI} (Tf)}.    
\end{align}
  \noindent (d) $\Leftrightarrow$ (c):
  Definition
%  (\ref{eq:cluster-ops})
  of $\COp{I}$, M\"{o}bius inversion (\ref{eq:C-by-Mobius}).
\end{proof}

If site independence holds, then $T$ is a symmetry of $\EOp{}$ only if it permutes
sites with identical species distributions. Thus, some space group operations of the
parent crystal structure may fail to be symmetries of $\EOp{}$ if the disordered
composition has a spatial modulation. On the other hand, the vast majority of
site permutations are not induced by euclidean operations. 
However, that does not imply that they are totally devoid of practical
interest. They are, for example, relevant to the constructuion of
special quasirandom structures (SQS's\cite{Zunger+90}), as well as to
regular solution models.

%%%%%%%%%%%%
\subsection{Tensor-valued observables}

To this point, we have considered only $\Reals$- or $\Cmplxs$-valued
observables, with energy as the primary example of an important one.
Many important observables, however, are not of this sort.
Polarization and magnetization are vectors, a dielectric tensor is
a rank-2 tensor, and, elasticity needs a rank-4 tensor for complete description.
The relevant extension of our formalism is straightforward.

Generally, we consider observables taking values in a finite-dimensional
Hilbert space $\vals$. Such an observable, belonging to the
tensor product $\vals\otimes\Fns$, can be written as
\begin{equation}
  \label{eq:tensor-observable}
  \vec{f} = \sum_i \evec_i \otimes f^i,
\end{equation}
where the $\evec_i$ are an ONB of $\vals$ and \hbox{$f^i\in\Fns$}.
We put an arrow above $\vec{f}$ to indicate that it takes values in some
nontrivial space, and use the term \textit{tensor observable} rather loosely
for all such observables.
The inner product on $\vals\otimes\Fns$ has two layers:
\begin{equation}
\ilinpr{\vec{g}}{\vec{f}}_{\vals\otimes\Fns}
 = \sum_i \inpr{g^i}{f^i}_\Fns  
 = \sum_i \sum_{x\in\Cfg} P(x) {g^i(x)^*}{f^i(x)}.
 \nonumber
\end{equation}
We have put subscripts on the inner products to be clear. Generally, we
will leave these off, relying on the rule that if both entries are in
the same space ($\Fns$ or $\vals\otimes\Fns$), then the inner product is
the corresponding one, and if one entry is scalar and one tensor, then
the inner product is that for $\Fns$, as in
\hbox{$\ilinpr{g}{\vec{f}} = \sum_i \evec_i \ilinpr{g}{f^i}$}.

Ordinary euclidean tensors fit naturally into this scheme.
If $\vals$ is rank-2 tensors, for example, then the inner product of
two tensors $A$ and $B$ is $\sum A_{ij}^* B_{ij}$, i.e., the matrix entries
are components relative to an orthonormal basis. Rotations and reflections
respect this orthonormality.

Elements of the space group of a parent crystal structure are,
or (perhaps a better word) induce site permutations.
If $T$ is such an element, then it acts on a tensor observable as
\hbox{$T(\sum \evec_i f^i) = \sum (T\evec_i) Tf^i$}, where \hbox{$f^i\mapsto Tf^i$}
is just the permutation action from Section~\ref{sec:permutations}.
For us, $T$ is not considered a symmetry operation unless it is
also a symmetry of $\EOp{}$. This identifies a subgroup of the space
group that will be denoted $\Gp$ henceforth.

We do not give the action of $T$ on elements of $\Fns$ a special name,
but it is sometimes helpful to write the action of $\Gp$ on $\vals$ as
\begin{equation}
  \label{eq:rep-rho}
  T \evec_i = \sum_j \rho_{ji}(T) \evec_j,
\end{equation}
so the $\rho(T)$ are the representation matrices. This representation is
not assumed to be irreducible.

Operators previously introduced on $\Fns$ are now lifted to $\vals\otimes\Fns$
in a natural way: the $\vals$ part simply passes through unchanged.
For example, if $v\in\vals$ and $f\in\Fns$,
$\COp{I} (v\otimes f) = \COp{I} (v f) = v (\COp{I} f) = v\otimes (\COp{I} f)$.
As just demonstrated, we can also often omit writing the tensor product symbol
if its clear what kinds of things the factors (here $v$ and $f$) are.

%%%%%%%%%%%%%%%%%%%
\subsection{Averaging over the group}

$\Gp$ will be a symmetry group for the observables of physical interest.
That is, such an observable $\vec{f}$ commutes with $T\in\Gp$,
$\vec{f}(Tx) = T(\vec{f}(x))$.
This is the same as $T\vec{f} = \vec{f}$, equivariance of $\vec{f}$.
Thus, \textit{averaging over the group} $\Gp$, achieved by the operator
\begin{equation}
  \label{eq:Av}
\Av^\Gp = \Av \defeq \frac{1}{|G|} \sum_{T\in\Gp} T
\end{equation}
is important. Generally, we omit the superscript when averaging is over $\Gp$.
It will be necessary when averaging over a subgroup such as a stabilizer
subgroup (e.g., $\Av^{\Gp_I}$).
On any vector space on which $\Gp$ is represented unitarily,
$\Av$ is an orthogonal projection onto the subspace of $\Gp$-equivariant vectors.
Thus, we can move it from one side to the other of an inner product,
e.g., $\ilinpr{\vec{g}}{\Av \vec{f}} = \ilinpr{\Av\vec{g}}{\vec{f}}$.
However, such a switch is illegitimate for $\ilinpr{g}{\Av\vec{f}}$,
with $\vec{f}\in\vals\otimes\Fns$, but $g\in\Fns$. The following Lemma
gives a useful formula for such a case.
\begin{lem}
  \label{lem:tensor-equivariants}
The $\Gp$-average of a separable tensor observable $\evec_i\otimes g = \evec_i g$ is  
  \begin{equation}
  \label{eq:tensor-equivariants-1}
\Av (\evec_i\otimes g) = \sum_j \evec_j \frac{1}{|\Gp|}\sum_{T\in\Gp} \rho_{ji}(T) Tg.
\end{equation}
From this follows
\begin{equation}
  \label{eq:tensor-equivariants-2}
\ilinpr{g}{\Av \vec{f}} = \sum_i \evec_i \frac{1}{|\Gp|}\sum_{T\in\Gp} \rho_{ji}(T) \ilinpr{Tg}{f^j}  
\end{equation}
\end{lem}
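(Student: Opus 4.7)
The plan is to verify each identity by direct computation, with the second following from the first by straightforward bookkeeping together with the unitarity of the $T$-action and the orthogonality of the representation $\rho$.

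For equation (\ref{eq:tensor-equivariants-1}), I would just unfold the definition $\Av = \frac{1}{|\Gp|}\sum_{T\in\Gp} T$ on a separable tensor. Since $T$ acts factorwise on the tensor product, $T(\evec_i \otimes g) = (T\evec_i)\otimes (Tg)$. Substituting the expansion $T\evec_i = \sum_j \rho_{ji}(T)\evec_j$ from (\ref{eq:rep-rho}) and pulling $\evec_j$ (which does not depend on $T$) out of the group average immediately yields the stated right-hand side. No property beyond the definitions of $\Av$, of the tensor action, and of $\rho$ is needed here.

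For (\ref{eq:tensor-equivariants-2}), I would expand $\vec f = \sum_i \evec_i \otimes f^i$, use linearity of $\Av$ and of the partial inner product $\ilinpr{g}{\,\cdot\,}$, and apply (\ref{eq:tensor-equivariants-1}) to each separable summand. This produces
$\ilinpr{g}{\Av \vec f} = \sum_{i,j} \evec_j \,\frac{1}{|\Gp|}\sum_T \rho_{ji}(T)\,\ilinpr{g}{Tf^i}$.
To bring this into the stated form, I would use unitarity of $T$ on $\Fns$ (Prop.~\ref{prop:E-symmetries}(b)) to rewrite $\ilinpr{g}{Tf^i} = \ilinpr{T^{-1}g}{f^i}$, then perform the bijective change of summation variable $T\mapsto T^{-1}$ on $\Gp$ and relabel the dummy basis indices $i\leftrightarrow j$.

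The one step that really deserves attention -- and what I view as the main obstacle -- is the identity $\rho_{ij}(T^{-1}) = \rho_{ji}(T)$ that is forced on us after the change of variables. This holds precisely when $\rho$ is an orthogonal (real-unitary) representation, so that $\rho(T^{-1}) = \rho(T)^{\top}$. That assumption is in force here because $\rho$ encodes the physical action of space-group (or site-permutation) elements on the euclidean tensor space $\vals$, preserving the real inner product as noted in the paragraph preceding the lemma. With this identity installed, the calculation closes and reproduces (\ref{eq:tensor-equivariants-2}) exactly.
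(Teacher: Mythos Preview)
Your argument is correct. For (\ref{eq:tensor-equivariants-1}) you do exactly what the paper does. For (\ref{eq:tensor-equivariants-2}) you take a genuinely different route: you push $\Av$ through $\vec f$ via (\ref{eq:tensor-equivariants-1}) and then repair the resulting $\ilinpr{g}{Tf^i}$ by unitarity of $T$ on $\Fns$, the substitution $T\mapsto T^{-1}$, and the identity $\rho_{ji}(T^{-1})=\rho_{ij}(T)$ from orthogonality of $\rho$, followed by an index swap. The paper instead writes $\ilinpr{g}{\Av\vec f}=\sum_i\evec_i\,\ilinpr{\evec_i g}{\Av\vec f}$ and moves $\Av$ across the \emph{full} $\vals\otimes\Fns$ inner product using its self-adjointness as an orthogonal projector, then applies (\ref{eq:tensor-equivariants-1}) to $\Av(\evec_i g)$. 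That route avoids the change of variables and index relabeling altogether; the only residual assumption is that the $\rho_{ji}(T)$ are real so no complex conjugate survives when they come out of the first slot of the inner product. Your route makes the dependence on orthogonality of $\rho$ more explicit, which is a virtue, while the paper's route is shorter because a single self-adjointness move replaces your three-step manipulation.
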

\begin{proof}
Compute
\begin{equation}
  \Av (\evec_i g)
    = \frac{1}{|\Gp|} \sum_{T\in\Gp} T\evec_i\otimes Tg.
% &    = \sum_j \evec_j \frac{1}{|\Gp|}\sum_{T\in\Gp} \rho_{ji}(T) Tg   
   \nonumber
\end{equation}
Inserting (\ref{eq:rep-rho}) and rearranging yields (\ref{eq:tensor-equivariants-1}).
For (\ref{eq:tensor-equivariants-2}), insert (\ref{eq:tensor-equivariants-1}) into
the final expression of
  \begin{equation}
  \ilinpr{g}{\Av \vec{f}}
  = \sum_i \evec_i \ilinpr{\evec_i g}{\Av \vec{f}}
  = \sum_i \evec_i \ilinpr{\Av \evec_i g}{\vec{f}}.
\nonumber    
  \end{equation}
\end{proof}

%%%%%%%%%%%% 
\subsection{Cluster-shape subspaces}\label{sec:cluster-shapes}

According to Prop.~\ref{prop:independence}(k),
$\Fns = \bigoplus_I \Clust(I)$ is the orthogonal direct sum of all
the cluster subspaces. Tensoring with $\vals$ on both sides yields
\begin{equation}
  \nonumber
\vec{\Fns} = \vals\otimes\Fns =
  \bigoplus_I \vals\otimes\Clust(I) = \bigoplus_I \vec\Clust(I).
\end{equation}
We have here introduced an unobtrusive over-arrow notation to indicate
the tensor products, and the point is that this change threads cleanly through
the cluster decomposition. Symmetry operations have a more profound effect.
According to Prop.~\ref{prop:E-symmetries}, 
whenever \textit{$T$ is a symmetry of} $\EOp{}$,
\hbox{$\vec\Clust(TI) = \COp{TI}\vec\Fns
= (T \circ \COp{I}\circ T^{-1}) \vec\Fns
= T (\COp{I}\vec\Fns) = T \vec\Clust(I)$}.
Symmetries permute the cluster subspaces for clusters in the
$\Gp$-orbit \hbox{ $\Gp I  = \{J\,|\, J\stackrel{\Gp}{\sim} I\}$}
of $I$. Here, $J\stackrel{\Gp}{\sim} I$, read ``$J$ is $\Gp$-related to $I$'',
means that $J=TI$ for some $T\in\Gp$.
We use the term \textit{cluster-shape} for a $\Gp$-orbit of clusters,
and denote them by calligraphic letters, e.g. $\mathcal{I}$.
So, $\mathcal{I} = \Gp J$ for any $J\in\mathcal{I}$.

The natural decomposition \hbox{$\vec\Fns = \bigoplus_{\mathcal{I}} \vec\Clust[\mathcal{I}]$}
is now into {cluster-shape} subspaces
\begin{equation}
\label{eq:cluster-type-space}
\vec\Clust[\mathcal{I}] \defeq \bigoplus_{J\in\mathcal{I}} \vec\Clust(J).
\end{equation}
Each cluster-shape subspace $\vec\Clust[\mathcal{I}]$ is collectively
\hbox{$\Gp$-invariant}, but individual vectors in it generally are not equivariant.
To obtain the subspace of $\Gp$-equivariant vectors, we simply apply the
$\Av$ projector: 
\begin{equation}
  \label{eq:decomp-into-Av-clustershape-subspaces}
\Av\vec\Fns = \bigoplus_{\mathcal{I}\in\ClSh }\Av\vec\Clust[\mathcal{I}].  
\end{equation}
Because $\Gp$ acts transitively on the orbit of $I$,
a nonzero $\Gp$-equivariant vector $\vec{f}$ in $\vec\Clust[\Gp I]$ necessarily has
nonzero component in each subspace $\Clust(J)$.
However, it can be recovered from just one such component $\COp{I} \vec{f}$, as shown
by the following Lemma. The computational significance of this is
that $\COp{I}\vec{f}$ is a function of fewer variables, namely just the part of
the configuration on $I$.
%%%%%%%%%%%%
\begin{prop}\label{prop:unfolding}
The maps in the following diagram  
\begin{equation}
  \label{eq:unfolding}
  \begin{tikzcd}[column sep=4 em]
    \Av^{\Gp_I}\Clust(I)
    \arrow[r,"\sqrt{|\Gp I|}\Av", shift left=0.7 ex]
&    \Av^{\Gp}\Clust[\Gp I]
\arrow[l,"\sqrt{|\Gp I|}\COp{I}",shift left= 0.7 ex]    
  \end{tikzcd}
\end{equation}
are mutually inverse unitary equivalences between the
subspace of \hbox{$\Gp_I$-equivariant} vectors in $\Clust(I)$ and
the subspace of \hbox{$\Gp$-equivariant} vectors in $\Clust[\Gp I]$.
\end{prop}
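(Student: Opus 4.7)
The plan is to verify the following four things in sequence: (i) both maps are well-defined (land in the claimed target subspaces), (ii) they are mutual inverses, (iii) they are isometries, and (iv) hence unitary equivalences. Throughout, the key external facts I would use are Prop.~\ref{prop:independence}(h,k) (orthogonality and direct-sum decomposition of cluster subspaces), Prop.~\ref{prop:E-symmetries}(d) ($\COp{TI}=T\COp{I}T^{-1}$), and the orbit-stabilizer identity $|\Gp|=|\Gp I|\cdot|\Gp_I|$.

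\emph{Well-definedness.} For the right-pointing map, take $g\in\Av^{\Gp_I}\Clust(I)$. For each $T\in\Gp$, Prop.~\ref{prop:E-symmetries}(d) gives $Tg\in\Clust(TI)$ with $TI\in\Gp I$, hence $\Av g\in\Clust[\Gp I]$; moreover $\Av$ projects onto $\Gp$-equivariants by construction, so $\Av g\in\Av^\Gp\Clust[\Gp I]$. For the left-pointing map, take $\vec f\in\Av^\Gp\Clust[\Gp I]$. By definition $\COp{I}\vec f\in\Clust(I)$, and for $T\in\Gp_I$,
\begin{equation}
T\COp{I}\vec f = (T\COp{I}T^{-1})(T\vec f) = \COp{TI}\vec f = \COp{I}\vec f,
\nonumber
\end{equation}
so $\COp{I}\vec f$ is $\Gp_I$-equivariant.

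\emph{Mutual inverse.} For $g\in\Av^{\Gp_I}\Clust(I)$, compute
\begin{equation}
|\Gp I|\,\COp{I}\Av g \;=\; \frac{|\Gp I|}{|\Gp|}\sum_{T\in\Gp} \COp{I}Tg.
\nonumber
\end{equation}
Since $Tg\in\Clust(TI)$, Prop.~\ref{prop:independence}(h) kills every term with $TI\neq I$; for $T\in\Gp_I$, $Tg\in\Clust(I)$, on which $\COp{I}$ acts as the identity, so the surviving sum is $\sum_{T\in\Gp_I} Tg = |\Gp_I|\,\Av^{\Gp_I} g = |\Gp_I|\,g$ by the assumed $\Gp_I$-equivariance. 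The prefactor $|\Gp I|\cdot|\Gp_I|/|\Gp|=1$ yields $g$. In the other direction, for $\vec f\in\Av^\Gp\Clust[\Gp I]$, use $T\vec f=\vec f$ to rewrite
\begin{equation}
|\Gp I|\,\Av\COp{I}\vec f = \frac{|\Gp I|}{|\Gp|}\sum_{T\in\Gp} T\COp{I}T^{-1}\vec f = \frac{|\Gp I|}{|\Gp|}\sum_{T\in\Gp}\COp{TI}\vec f.
\nonumber
\end{equation}
As $T$ ranges over $\Gp$, each $J\in\Gp I$ occurs exactly $|\Gp_I|$ times, so this becomes $\sum_{J\in\Gp I}\COp{J}\vec f$, which equals $\vec f$ because $\vec f\in\Clust[\Gp I]$ and Prop.~\ref{prop:independence}(k) makes $\sum_{J\in\Gp I}\COp{J}$ the orthogonal projector onto $\Clust[\Gp I]$.

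\emph{Isometry.} For $g\in\Av^{\Gp_I}\Clust(I)$, using that $\Av$ is self-adjoint and idempotent,
\begin{equation}
\|\Av g\|^2 = \inpr{g}{\Av g} = \frac{1}{|\Gp|}\sum_{T\in\Gp}\inpr{g}{Tg}.
\nonumber
\end{equation}
The summand vanishes unless $TI=I$ by orthogonality of distinct cluster subspaces (Prop.~\ref{prop:independence}(k)), and for $T\in\Gp_I$, $\Gp_I$-equivariance gives $Tg=g$, so $\inpr{g}{Tg}=\|g\|^2$. Thus $\|\Av g\|^2 = (|\Gp_I|/|\Gp|)\|g\|^2 = \|g\|^2/|\Gp I|$, i.e., $\|\sqrt{|\Gp I|}\Av g\|=\|g\|$. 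Since the two maps are mutual inverses, the other is automatically an isometry too, completing the proof that they are unitary equivalences.

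The main obstacle, and the place to be careful, is bookkeeping with the three indices $|\Gp|$, $|\Gp I|$, $|\Gp_I|$: it is essential that the normalization $\sqrt{|\Gp I|}$ (rather than e.g.\ $\sqrt{|\Gp|}$) is what makes the orbit-stabilizer cancellation produce both the identity (for the composition) and the correct norm. Everything else follows directly from the orthogonality of distinct cluster subspaces and the symmetry-covariance of $\COp{I}$.
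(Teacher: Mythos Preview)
Your proof is correct and uses essentially the same ingredients as the paper's: both rely on $\COp{TI}=T\COp{I}T^{-1}$, orthogonality of distinct cluster subspaces, and orbit--stabilizer bookkeeping. The organization differs slightly: you verify mutual inverse directly and then check a single isometry, whereas the paper checks both isometries separately (with the mutual-inverse relation hiding inside the second isometry computation, via $|\Gp|\,\COp{I}\Av g=\sum_{T\in\Gp_I}Tg$). Your ordering is arguably cleaner, but the two arguments are equivalent.
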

\begin{proof}
  It suffices to show that the linear maps in (\ref{eq:unfolding})
  have ranges in the indicated spaces, and are isometric.
  We drop the superscript from $\Av^\Gp$ now.
  
Ranges are in the indicated spaces:\newline
\hbox{$T \Clust(I) \subseteq \Clust[\Gp I]$}, $\forall T\in\Gp$,
so $\Av \Av^{\Gp_I} \Clust(I) \subseteq \Av \Clust[\Gp I]$.
\newline
In the other direction, $\Av^{\Gp_I}$ and $\COp{I}$ commute, so
\hbox{
  $(\COp{I} \circ \Av)\Clust[\Gp I]
  \subseteq (\Av^{\Gp_I}\circ\COp{I})\Clust[\Gp I]
  = \Av^{\Gp_I} \Clust(I)$}.

Lower arrow in (\ref{eq:unfolding}) is isometric: 
Take $\vec{f}\in\Av\Clust[\Gp I]$.
Since $J\in\Gp I\,\Rightarrow\,
\|\COp{J}\vec{f}\| = \|\COp{I}\vec{f}\|$,
\begin{equation}
  \label{eq:unfolding-lemma-1}
\| \vec{f} \|^2  
  = \sum_{J\in\Gp I} \| \COp{J} \vec{f}\|^2  
  = |\Gp I|\, \| \COp{I} \vec{f}\|^2.
\end{equation}

Upper arrow in (\ref{eq:unfolding}) is isometric: 
Take \hbox{$\vec{f}\in\Av^{\Gp_I}\Clust(I)$}. \newline
By (\ref{eq:unfolding-lemma-1}),
\hbox{$\| \Av\vec{f}\|^2 = |\Gp I| \|\COp{I}(\Av\vec{f})\|^2$}.
Combine this with
\hbox{$|\Gp|^2 \, \|\COp{I}(\Av\vec{f})\|^2
= \|\sum_{T\in\Gp_I} T \vec{f} \|^2 = |\Gp_I|^2 \, \|\vec{f}\|^2  $}.
\end{proof}

Lemma~\ref{prop:unfolding} allows us to effectively replace $\Av\Clust[\mathcal{I}]$,
by the much smaller space $\Av^{\Gp_I}\Clust(I)$ for an arbitrarily selected $I$
in the cluster-shape $\mathcal{I}$.
This is advantageous for computation.
With reference to the decomposition (\ref{eq:decomp-into-Av-clustershape-subspaces}),
we then have
\begin{equation}
\label{eq:Fns-into-representative-cluster-subspaces}
\Av\vec\Fns \cong \bigoplus_{I\in\text{section}} \Av^{\Gp_I} \vec\Clust(I).  
\end{equation}
Here, ``section'' denotes a selection of one cluster from each cluster-shape.

%%%%%%%%%%%%%%%%%%%%%%%%%%%%%%%%%%% 
\section{Two points of computational efficiency}\label{sec:efficiency}

The following Section will discuss how to construct a model for an observable
of interest using pieces belonging to $\Gp$-invariant cluster-shape subspaces
$\Av\vec\Clust[\Gp I]$. That will be done without using the traditional
cluster functions $\Phi_I^\alpha$.
Here, we take a brief detour to demonstrate the computational practicality of our approach.
Thus, we suppose that we have a \textit{cluster observable}, i.e., a
vector in $\Av\vec\Clust[\mathcal{I}]$ for some cluster-shape $\mathcal{I}$.
How can we represent and use it efficiently without the cluster functions?

%%%%%%%%%%% 
\subsection{Folding and unfolding}\label{sec:folding-and-unfolding}

A function in $\vec\Clust(I)$ has $\dim\vals$ components and depends on
$|I|$ site-occupation variables, while a cluster observable in
$\Av\vec\Clust[\Gp I]$ depends on $|\Gp I||I|$ variables.
Lemma~\ref{prop:unfolding} shows how the latter is reducible to the former in
principle. Let us see how that can be done in more detail.
We have $\vec{f}\in\Av\vec\Clust[\Gp I]$ and need to evaluate it on
configuration $y$.
It is useful to introduce maps
$\Arr{\Cfg}{\fold^I_{ij}}{\Fns(I)}$ by
\begin{equation}
  \label{eq:fold}
  \fold^I_{ij} y
  \defeq \frac{1}{|\Gp_I|}\sum_{T\in\Gp} \rho_{ij}(T)\, \EOp{I}(Ty).
\end{equation}
Here is the main result.
\begin{lem}\label{lem:folding}
For $\vec{f} \in \Av \Clust[\Gp I]$,  
\begin{align}
\vec{f}(y)
% & \stackrel{(\ref{eq:fold})}{=}
& = \sum_{ij} \evec_i \inpr{\fold^I_{ij} \hat{y}}{\COp{I} f^j}  
    \nonumber \\
  & = \sum_{ij} e_{i}\rho_{ji}(T)\, [\COp{I} f^j](T{y})
    \label{eq:using-fold}
\end{align}
\end{lem}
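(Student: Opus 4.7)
My plan is to bootstrap the formula from Proposition~\ref{prop:unfolding}, which already tells us how a $\Gp$-equivariant vector in $\vec\Clust[\Gp I]$ can be reconstructed from its $\COp{I}$ component. Concretely, since $\vec f\in\Av\vec\Clust[\Gp I]$, the unfolding relation gives $\vec f = |\Gp I|\,\Av\,\COp{I}\vec f$. Because $\COp{I}$ and the tensor structure commute, $\COp{I}\vec f = \sum_j \evec_j\otimes \COp{I} f^j$, so once I expand the averaging operator and use Prop.~\ref{prop:hats} to evaluate, both equalities of the lemma should fall out.

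For the second equality, I expand $\Av = \frac{1}{|\Gp|}\sum_{T\in\Gp} T$ and use the tensor-action formula $T(\evec_j\otimes g) = \sum_i \rho_{ij}(T)\,\evec_i\otimes Tg$ together with $(Tg)(y) = g(T^{-1}y)$. Combining this with the orbit--stabilizer identity $|\Gp I|\cdot|\Gp_I| = |\Gp|$ yields
\begin{equation}
\vec f(y) \;=\; \frac{1}{|\Gp_I|}\sum_{T\in\Gp}\sum_{ij} \rho_{ij}(T)\,\evec_i\,[\COp{I} f^j](T^{-1}y).\nonumber
\end{equation}
Relabelling $T\mapsto T^{-1}$ and invoking unitarity of $\rho$ (which for a unitary representation exchanges $\rho_{ij}(T^{-1})$ with $\rho_{ji}(T)^*$, and which for a real orthogonal $\rho$ simply transposes indices) puts this in the form asserted on the second line of~(\ref{eq:using-fold}).

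For the first equality, I rewrite $[\COp{I} f^j](T^{-1}y) = \ilinpr{\wh{T^{-1}y}}{\COp{I} f^j} = \ilinpr{T^{-1}\hat y}{\COp{I} f^j}$ via Prop.~\ref{prop:hats}. Because $\COp{I} f^j\in\Fns(I)$ and $\EOp{I}$ is the orthogonal projection onto $\Fns(I)$, I may insert $\EOp{I}$ in the first slot for free: $\ilinpr{T^{-1}\hat y}{\COp{I} f^j} = \ilinpr{\EOp{I}(T^{-1}\hat y)}{\COp{I} f^j}$. Pulling the scalar prefactor and the $T$-sum inside the antilinear first slot of the inner product, then comparing to the definition~(\ref{eq:fold}) of $\fold^I_{ij}$, recognises the resulting expression as $\ilinpr{\fold^I_{ij}\hat y}{\COp{I} f^j}$, completing the identification.

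The main obstacle is purely bookkeeping: because $T\evec_i = \sum_j\rho_{ji}(T)\evec_j$, the indices of $\rho$ appear transposed relative to the natural left-to-right reading, and the substitution $T\mapsto T^{-1}$ combined with unitarity of $\rho$ introduces a further index swap (and, for complex representations, a conjugation). Keeping these two effects aligned so that the final coefficient on $[\COp{I} f^j](Ty)$ is precisely $\rho_{ji}(T)$ --- and, equivalently, that the weight inside the fold is $\rho_{ij}(T)$ --- is the only delicate point; everything else is linear algebra plus the already-established facts $\EOp{I}\COp{I} = \COp{I}$, $\ilinpr{\hat x}{g} = g(x)$, and the unfolding identity of Prop.~\ref{prop:unfolding}.
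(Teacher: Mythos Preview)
Your proposal is correct and follows essentially the same route as the paper: both start from the unfolding identity $\vec{f}=|\Gp I|\,\Av\,\COp{I}\vec{f}$ of Prop.~\ref{prop:unfolding} and then expand the group average. The paper's one-line proof packages that expansion by invoking Lemma~\ref{lem:tensor-equivariants} (specifically eq.~(\ref{eq:tensor-equivariants-2})), which moves $\Av$ across the inner product onto $\hat{y}$ and thereby lands on $T\hat{y}$ and $[\COp{I}f^j](Ty)$ directly; you instead expand $\Av$ on the $\vals\otimes\Fns$ side, pick up $[\COp{I}f^j](T^{-1}y)$, and then relabel $T\mapsto T^{-1}$. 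The relabeling plus unitarity step is the only extra bookkeeping your route incurs, and you have correctly flagged the conjugation subtlety that both approaches share.
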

\begin{proof}
Apply (\ref{eq:tensor-equivariants-2}) to
$\vec{f}(y)
 \stackrel{(\ref{eq:unfolding})}{=}
    |\Gp I| \inpr{\hat{y}}{\Av (\COp{I} \vec{f})}$.
\end{proof}

The component of $\vec{f}$ in $\Av^{\Gp_I}\Clust(I)$
is just $\sum_i \evec_i \COp{I} f^i$, so the representations
in (\ref{eq:using-fold}) are obtained directly from this component.
The second sum only involves evaluating a function,
$[\COp{I}f^j](Ty)$, which is in $\Fns(I)$, on the
on the restriction of configuration $Ty$ to $I$.
The first form is interesting because 
$\ilinpr{\fold^I_{ij} \hat{y}}{\COp{I} f^j}$ involves a uniformity
of representation. Both the observable and configuration are
effectively in the small Hilbert space $\Fns(I)$ here.

%%%%%%%%%%%
\subsection{Storage and use of cluster observables}

Lemma~\ref{lem:folding} involves evaluating \hbox{$\COp{I} f^j \in \Fns(I)$}
on subconfigurations.
How may we represent such a function without using a cluster function basis?
Simply as a lookup table. Consider the efficiency of this method.
For the sake of simplicity, suppose that all species spaces $\Sps{i}$ have the
same size. Then, dimensions of relevant spaces are
\hbox{$\dim \Fns^0(i) = |\mathsf{Sps}|-1$}, 
\hbox{ $\dim\Clust(I) = (|\mathsf{Sps}|-1)^{|I|}$}, and
\hbox{$\dim \Fns(I) = |\mathsf{Sps}|^{|I|}$}.

Suppose $g$ is in $\Clust(I)$. Generally, we need to store at least
$\dim\Clust(I)$ numbers to represent it. 
Since the conventional cluster functions $\Phi^\alpha_I$ provide an ONB of
$\Clust(I)$, an efficient representation is obtained through the components
$\inpr{\Phi_I^{{\alpha}}}{g}$.
The hat-functions do not immediately provide a basis.
of $\Clust(I)$, but they do provide an orthogonal basis of the larger space $\Fns(I)$.
The function $g$ can then be represented by its components
$g(x) = \inpr{\hat{x}}{g}$ for $x\in\Cfg(I)$, of which there are $|\mathsf{Sps}|^{|I|}$.
Effectively, this is to consider $g$ as an element of $\Fns(I)$.
Compared to the conventional formalism, this is somewhat inefficient,
but not horribly so, unless $|I|$ is large.

There are countervailing considerations, however.
The whole point of having $g$ is to evaluate it on configurations.
With the second computational representation,
$g(x)$ is evaluated by simply stripping out $x_I$, the restriction of $x$ to
$I$, and looking up the corresponding value of $g$ in memory.
With the conventional computational representation, we have
\begin{equation}
  g(x) = \sum_{{\alpha}}
  \inpr{\hat{x}}{\Phi_I^{{\alpha}}}\inpr{\Phi_I^{{\alpha}}}{g}.
\end{equation}
It is necessary to first evaluate the factors $\inpr{\hat{x}}{\Phi_I^{{\alpha}}}$.
If we need to evaluate $g$ on many configurations, this becomes increasingly
costly.

%%%%%%%%%%%%%%%%%%%%%%%%%%%%%%%%%%% 
\section{Sampling, modeling, and Hilbert space geometry}
\label{sec:modeling}

\subsection{Defining a precise problem}

This section is concerned with how we can put the preceding formalism
to work in practical applications.
We assume some $\Gp$-equivariant scalar (e.g., energy) or tensor (e.g., dielectric tensor)
observable $\vec{f}$ of some configurationally disordered material, or class
of materials sharing a parent structure.
In terms both broad and vague, the problem is

\begin{itemize}
\item [P0:]
Find the most accurate and transferable model of $\vec{f}$, using limited sample data.
\end{itemize}

\noindent A mode can be \textit{transferable} with respect to many
system characteristics. System size is a particular one that we have
in mind here. The basis for thinking that the cluster expansion formalism
might lead to a model transferable in that sense is the idea of having components
in $\Clust(J)$ and $\Clust(I)$ which are simply related by a $\Gp$ operation
whenever $J$ and $I$ are so. It at least provides a plausible recipe for carrying
out the transfer. 
Of a proposed solution to P0, one could not hope to say
it is correct or incorrect. It may even be difficult to judge whether it is
better or worse than another. 
Therefore, while something like $\bm\alpha$ should remain a basic aspiration,
more precise, and probably limited, problems are needed to guide us.
We return now to a fixed-system-size framework, and put trasferability aside for
future analysis.
Two more precise versions are

\begin{itemize}
\item [P1:]
Using sampled data, find the best model of $\vec{f}$ in a
\textit{model subspace} $\ssp{M}$ of $\Gp$-equivariant observables.
%\noindent $\bm\alpha''$.
\item [P2:]
Using sampled data, find the best approximation of the orthogonal projection
$\prM \vec{f}$ of $\vec{f}$ into $\ssp{M}$.
\end{itemize}

In practice, the model space is
\begin{equation}
\label{eq:model-space}
{\ssp{M}} \defeq  \bigoplus_{\mathcal{I}\in\Small} \Av \vec\Clust[\mathcal{I}],
\end{equation}
where $\Small$ denotes some collection of cluster-shapes which are small,
in number of sites and/or spatial extent.
The orthogonal projector ${\prM}$ onto $\ssp{M}$ is
\begin{equation}
\label{eq:model-space-projector}
{\prM} = \sum_{\mathcal{I} \in \Small} \Av \circ \COp{\mathcal{I}}.
\end{equation}

P1 makes no reference to the Hilbert space structure of $\Fns$ at all.
P2 does, and in the process partially resolves the adjective \textit{best} in P1.
Namely, $\vec{f} = \prM\vec{f} + \prM^\perp\vec{f}$, where $\prM^\perp = 1-\prM$
is the complementary orthogonal projection to $\prM$. According to P2, our
target is $\prM\vec{f}$. 
This makes sense. $\prM\vec{f}$ provides the unique minimizer of
$\|\vec{f}-\vec{F}\|$ over all $\vec{F}\in\ssp{M}$, hence is the natural
Hilbert space notion of ``best approximation to $\vec{f}$ which lies in $\ssp{M}$''.

P2 still contains a \textit{best} that we shall have to interpret if we are
to assess any proposed solution.
One venerable option is to cash it out in statistical terms.
That entails a \textit{sampling distribution} according to which
our sample is \textit{randomly} drawn.
However, we are free to \textit{deliberately choose} the sample
so as to learn as much as possible about $\vec{f}$ with as little
effort as necessary.
For that reason, we do not take the statistical route.
The results of this section can help us see how to make those
choices more intelligently, so as to do more with a smaller sample.

Backing up a bit, we give our sample a name, $\Smpl$.
Sometimes this will be the collection of configurations paired with
the corresponding values of $\vec{f}$, and sometimes just the configurations
(which we will tend to write upper case to distinguish from generic configurations).
Now, it would be a waste to have two configurations in $\Smpl$ which are
$\Gp$-related, since $\vec{f}(TX) = T(\vec{f}(X))$, and we know how
$\Gp$ acts on $\vals$. Thus, we assume that $\Smpl$ contains more than one
representative of any $\Gp$ orbit.
Then, we know the values of $\vec{f}$ on every member of
\begin{equation}
\Gp\cdot\Smpl = \setof{TX}{X\in\Smpl,T\in\Gp}.  
\end{equation}
Or, in Hilbert space geometry terms, we know the orthogonal projection
of $\vec{f}$ into the subspace
\begin{equation}
\ssp{S} \defeq \Span \{\vals \otimes \wh{X} \,|\, X\in\Gp\cdot\Smpl \}
\end{equation}
of $\vals\otimes\Fns$.
In keeping with our convention, the orthogonal projector into $\ssp{S}$ is
denoted $\prS$. $\ssp{S}$ is evidently the maximal subspace on which we know
$\vec{f}$, and is called the \textit{shadow subspace}. The metaphor
is that $\vec{f}$ casts a shadow in $\ssp{S}$, and we are going to look for
a vector in the model subspace $\ssp{M}$ that casts a shadow as nearly
indistinguishable from that of $\vec{f}$ as possible.
Thus, our final version of the problem is

%\medskip
%\noindent $\bm\beta$.
\begin{itemize}
\item [P3:] Given the shadow subspace $\ssp{S}$ constructed from $\Smpl$
as described, seek $\vec{F}\in\ssp{M}$ such that
\begin{enumerate}[label=\textnormal{\arabic*$^\circ$}]
\item $\| \prS \vec{F} - \prS \vec{f} \|$ is minimum.
\item $\| \vec{F} \|$ is minimum, subject to 1$^\circ$.
\end{enumerate}
\end{itemize}
%\medskip

\noindent
The idea is that, of $\vec{f}$, we know only that it is $\Gp$-equivariant,
and its projection $\prS\vec{f}$. Therefore, we seek a model in $\ssp{M}$
matching it in these respects. However, it may not be possible to match
it perfectly, hence 1$^\circ$, and even then, there may be freedom
in the model, so 2$^\circ$ stipulates that idle structure be stripped out.
The given norm on $\Fns$ provides throughout our standard of comparisons.
We proceed to give the solution to this problem, prove its correctness,
and contrast it with the traditional approach to fitting a cluster expansion model.

%%%%%%%%%%%%%%%%5
\subsection{Solving the abstract problem}\label{sec:abstract-model}

The solution to problem P3 is given in Proposition \ref{prop:Up} below,
which shows that it is realized by a linear operator $\Up{\ssp{S}}{\ssp{M}}$
which does not depend on $\vec{f}$ at all, but only on $\ssp{S}$ and $\ssp{M}$.
In principle, the linear operator $\Up{\ssp{S}}{\ssp{M}}$ provides simultaneous
models for any number of observables measured on the sample \Smpl.
While it is not generally practical to construct
$\Up{\ssp{S}}{\ssp{M}}$ in its entirety, it can be evaluated
on an individual observable by (\ref{eq:min}).
The following abbreviations will be used:
\begin{alignat}{2}
\ssp{S}' & \defeq \prS {\ssp{M}}    
%=  \im \prS|_{\ssp{M}},
\qquad
& \prM_0 & \defeq \prM|_{\ssp{S}'}
% = \im \prM|_{\ssp{S}}
\nonumber \\                           
 {\ssp{M}}' & \defeq \prM \ssp{S} \qquad
& \prS_0 & \defeq \prS|_{\ssp{M}'}.
\end{alignat}
The preparatory Lemma \ref{lem:Up} clarifies the geometry,
showing that the orthogonal complement of $\ssp{M}'$ in the
model space $\ssp{M}$ is completely inert for modeling purposes, while
the orthogonal complement of $\ssp{S}'$ in $\ssp{S}$ cannot be
effectively modeled from $\ssp{M}$.
%%%%%%%%%%%%
\begin{lem}
  \label{lem:Up}
The following orthogonal direct sum decompositions hold:
\begin{subequations}
\begin{align}
  \ssp{S} & = \ssp{S}'  \oplus  \ker \prM|_{\ssp{S}}
            \label{eq:S-ortho-decomp}
  \\
  \ssp{M} & = \ssp{M}'  \oplus  \ker \prS|_{\ssp{M}}
            \label{eq:M-ortho-decomp}
\end{align}
\end{subequations}
Moreover, the restricted projections
\begin{equation}
   \nonumber
\Arr{\ssp{S}'}{\prM_\bullet}{\ssp{M}'} \text{ and }\;
\Arr{\ssp{M}'}{\prS_\bullet}{\ssp{S}'}
\end{equation}
are bijections, hence have well-defined inverses.
%$\prM_\bullet^{-1}$ and $\prS_\bullet^{-1}$.
\end{lem}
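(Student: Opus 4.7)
The plan is to view the restricted projection $A := \prM\big|_{\ssp{S}} : \ssp{S} \to \ssp{M}$ as a linear map between finite-dimensional Hilbert spaces and bring in its Hilbert-space adjoint. A short calculation identifies this adjoint: for any $v \in \ssp{S}$ and $w \in \ssp{M}$, using $\prS v = v$ and $\prM w = w$ together with self-adjointness of $\prM$ in the ambient space $\vals\otimes\Fns$,
\begin{equation}
\ilinpr{A v}{w} = \ilinpr{\prM v}{w} = \ilinpr{v}{\prM w} = \ilinpr{v}{w} = \ilinpr{v}{\prS w},
\nonumber
\end{equation}
so $A^{*} = \prS\big|_{\ssp{M}} : \ssp{M} \to \ssp{S}$. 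Reading off ranges and kernels, $\ran A = \ssp{M}'$, $\ran A^{*} = \ssp{S}'$, $\ker A = \ker \prM\big|_{\ssp{S}}$, and $\ker A^{*} = \ker \prS\big|_{\ssp{M}}$.

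From here, the two orthogonal decompositions follow from the standard finite-dimensional identity $\ran A^{*} = (\ker A)^{\perp}$, where the orthogonal complement is taken inside the domain $\ssp{S}$. This immediately yields (\ref{eq:S-ortho-decomp}), and the symmetric identity $\ran A = (\ker A^{*})^{\perp}$ inside $\ssp{M}$ yields (\ref{eq:M-ortho-decomp}).

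For bijectivity of the restricted projections, consider $\prM_\bullet = \prM\big|_{\ssp{S}'}$. Using (\ref{eq:S-ortho-decomp}),
\begin{equation}
\prM \ssp{S}' = \prM\bigl(\ssp{S}' \oplus \ker \prM\big|_{\ssp{S}}\bigr) = \prM \ssp{S} = \ssp{M}',
\nonumber
\end{equation}
so $\prM_\bullet$ is surjective onto $\ssp{M}'$; its kernel inside $\ssp{S}'$ is $\ssp{S}' \cap \ker \prM\big|_{\ssp{S}} = \{0\}$ by orthogonality, giving injectivity. The argument for $\prS_\bullet : \ssp{M}' \to \ssp{S}'$ is identical after interchanging the roles of $\ssp{S}$ and $\ssp{M}$. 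The only real trap is notational bookkeeping --- carefully distinguishing $\prM\big|_{\ssp{S}}$, $\prM\big|_{\ssp{S}'}$ and their adjoints, and checking that the adjoint of $\prM\big|_{\ssp{S}}$ is truly $\prS\big|_{\ssp{M}}$ rather than some other restriction; once that is clear, everything else reduces to the fundamental theorem of linear algebra for adjoint pairs.
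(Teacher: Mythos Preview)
Your proof is correct and is essentially the same as the paper's: the paper directly verifies, for $s\in\ssp{S}$, the equivalence $s\in\ker\prM \Leftrightarrow s\perp \prS\ssp{M}$, which is exactly your adjoint identification $A^{*}=\prS|_{\ssp{M}}$ combined with $\ran A^{*}=(\ker A)^{\perp}$ unpacked inline. The bijectivity argument is likewise the same, and the second decomposition follows by symmetry in both treatments.
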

\begin{proof}
  For $s\in{\cal S}$,
  \begin{equation}
    \nonumber
    s\in \ker \prM
   \, \Leftrightarrow\, \forall v\in{\ssp{M}}, \, \inpr{s}{v}=0    
   \, \Leftrightarrow\, s \perp \prS{\ssp{M}}.
 \end{equation}
 This establishes (\ref{eq:S-ortho-decomp}).
It also shows that $\prM$ is injective on ${\cal S}'$, and
since any $v\in{\ssp{M}}'$ is the image of something in ${\cal S}'$,
bijectivity of $\prM_\bullet$ follows.
The rest follows by symmetry.
\end{proof}
%%%%%%%%%%%%
\begin{prop}\label{prop:Up}
There is a unique vector $\Up{\ssp{S}}{\ssp{M}} \vec{f}\in\ssp{M}$ satisfying conditions
\textnormal{1$^{\circ}$} and \textnormal{2$^{\circ}$}.
It is given by the linear operator
\begin{equation}
\label{eq:Up}
\Up{\ssp{S}}{\ssp{M}} = \Arr{\Fns}{\prS_\bullet^{-1} \prS'}{\ssp{M}'},
\end{equation}
where $\prS'$ denotes orthogonal projection onto \hbox{${\cal S}' = \prS \ssp{M}$}.
The model $\Up{\ssp{S}}{\ssp{M}} \vec{f}$ is alternatively characterized by
\begin{equation}
  \label{eq:min}
\Up{\ssp{S}}{\ssp{M}} f = 
\underset{h\in\prM\ssp{S}}{\arg\min} \|\prS f - \prS h\|.
\end{equation}
\end{prop}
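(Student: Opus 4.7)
The plan is to reduce both conditions 1$^\circ$ and 2$^\circ$ to standard Hilbert-space best-approximation statements using the orthogonal decompositions provided by Lemma \ref{lem:Up}. Those decompositions are the substantive input; the rest is Pythagoras and the bijectivity of $\prS_\bullet$.

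First I would analyze condition 1$^\circ$. As $\vec{F}$ ranges over $\ssp{M}$, the vector $\prS \vec{F}$ ranges over $\prS \ssp{M} = \ssp{S}'$. Hence minimizing $\|\prS \vec{F} - \prS \vec{f}\|$ is a best-approximation problem for $\prS \vec{f}$ in the subspace $\ssp{S}'$, whose solution is the orthogonal projection onto $\ssp{S}'$. Since $\ssp{S}' \subseteq \ssp{S}$ gives $\prS' \prS = \prS'$, the minimizing value is $\prS \vec{F} = \prS' \vec{f}$. By the bijectivity of $\prS_\bullet \colon \ssp{M}' \to \ssp{S}'$ from Lemma~\ref{lem:Up}, there is a unique $\vec{F}_0 \in \ssp{M}'$ realizing this, namely $\vec{F}_0 = \prS_\bullet^{-1} \prS' \vec{f}$. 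The full set of minimizers in $\ssp{M}$ is then $\vec{F}_0 + \ker \prS|_{\ssp{M}}$.

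Next I would invoke the decomposition (\ref{eq:M-ortho-decomp}): $\ssp{M} = \ssp{M}' \oplus \ker \prS|_{\ssp{M}}$ is an orthogonal direct sum. Writing any 1$^\circ$-minimizer as $\vec{F} = \vec{F}_0 + \vec{w}$ with $\vec{w} \in \ker \prS|_{\ssp{M}}$ orthogonal to $\vec{F}_0 \in \ssp{M}'$, Pythagoras gives $\|\vec{F}\|^2 = \|\vec{F}_0\|^2 + \|\vec{w}\|^2$. Condition 2$^\circ$ is then uniquely enforced by $\vec{w} = 0$, yielding the unique solution $\vec{F} = \vec{F}_0 = \prS_\bullet^{-1} \prS' \vec{f}$, which is linear in $\vec{f}$ as a composition of linear operators. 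This establishes both existence, uniqueness, and the formula (\ref{eq:Up}).

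Finally, for the alternative characterization (\ref{eq:min}), the argmin is restricted to $h \in \ssp{M}'$, and as $h$ varies there $\prS h = \prS_\bullet h$ traces out all of $\ssp{S}'$ bijectively. The minimization of $\|\prS f - \prS h\|$ again selects the unique $h$ with $\prS h = \prS' f$, namely $\prS_\bullet^{-1} \prS' f$, agreeing with $\Up{\ssp{S}}{\ssp{M}} f$. No genuine obstacle arises; the only care needed is to track whether a projector acts on the ambient space $\vals \otimes \Fns$ or on a subspace, so that identities such as $\prS' \prS = \prS'$ and the two orthogonality statements of Lemma~\ref{lem:Up} are applied to vectors in the correct domains.
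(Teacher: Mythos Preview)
Your proof is correct and follows essentially the same route as the paper's: both rely on Lemma~\ref{lem:Up} to reduce the problem to $\ssp{M}'$ and identify the minimizer as $\prS_\bullet^{-1}\prS'\vec{f}$. The only minor difference is that you invoke the standard best-approximation/Pythagoras characterization of orthogonal projection directly, whereas the paper uses a variational (differentiation) argument to locate the critical point; your version is arguably cleaner and more explicit.
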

%%%%%%%%%%%%
\begin{proof}
The preceding lemma says that \hbox{$\prS \ssp{M} = \prS \prM  \ssp{S}$}.
Hence, $\tilde{f}$
has the form $\prM g$ for some $g\in\ssp{S}$ because any component of $\tilde{f}$
orthogonal to $\prM\ssp{S}$ would increase its norm without changing its image under
$\prS$. Appeal to criterion 1$^\circ$ then establishes (\ref{eq:min}).

 Differentiation with respect to $g$ says that minimizing \hbox{$\|\prS f - \prS\prM g\|^2$}
 requires $\prM(\prS f - \prS\prM g)=0$.
 Equivalently, $\prS\prM g$ is the projection
$\prS' \prS f  = \prS' f$ of $f$ into $\prS\ssp{M}$.
By defintion of $\prS_\bullet$ (see Lemma), \hbox{$\prM g = \prS_\bullet^{-1} \prS'$}, which finishes
the proof, given (\ref{eq:min}).
\end{proof}

%%%%%%%%%%%%%%%%
\subsection{Application to the sampling context}
\label{sec:shadow-from-sampling}

Now we apply the abstract Proposition~\ref{prop:Up} to the context in which
we are interested, namely, where $\ssp{S}$ is determined by a sample $\Smpl$.
Concerning $\ssp{M}$, we only assume that its vectors are $\Gp$-equivariant.
There are essentially two aspects to this: the concrete form of the loss function,
given in Proposition~\ref{prop:model-from-sample}, and concrete parametrization of candidate
models in $\ssp{M}'$, discussed later.

\subsubsection{Loss function}
%%%%%%%
\begin{prop}\label{prop:model-from-sample}
  Assume
  \begin{itemize}
  \item $\ssp{S}$ is determined by $\Smpl$, no two members of which are $\Gp$-related
  \item $\Av \ssp{M} = \ssp{M}$, $\Av\vec{f} = \vec{f}$.
  \end{itemize}
  Then,
\begin{align}
& {\ssp{S}}  =
            \Span \setof{\vals\otimes T\hat{X}}{T\in\Gp,X\in\Smpl},
            \label{eq:prop-model-S} \\
& {\ssp{M}'} = \Span \setof{\prM (\vals\otimes \hat{X})}{X\in\Smpl}.
            \label{eq:prop-model-M} 
\end{align}
The model is given by
\begin{equation}
\nonumber
\Up{{\ssp{S}}}{{\ssp{M}}} \vec{f}
= \arg\min_{\vec{F}\in{\ssp{M}}'} L(\vec{F})
\end{equation}
with the concrete loss function
\begin{equation}
\label{eq:loss-concrete}
L(\vec{F}) =
  \sum_{Y\in\Smpl} P(\Gp Y) \left\| \vec{f}(Y) - \vec{F}(Y) \right\|_\vals^2.
\end{equation}
\end{prop}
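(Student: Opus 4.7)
The plan is to reduce the claim to Proposition~\ref{prop:Up} by concretely identifying the spaces $\ssp{S}$, $\ssp{M}'$, and by computing $\|\prS\vec{f}-\prS\vec{F}\|^2$ as the stated loss $L(\vec{F})$.

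For the spanning formulas: (\ref{eq:prop-model-S}) is nothing more than the definition $\ssp{S} \defeq \Span\{\vals\otimes\wh{X}: X\in\Gp\cdot\Smpl\}$ rewritten via $\Gp\cdot\Smpl = \{TX: T\in\Gp,\, X\in\Smpl\}$. For (\ref{eq:prop-model-M}), I write $\ssp{M}' = \prM\ssp{S} = \Span\{\prM(\vals\otimes T\wh{X}): T\in\Gp,\, X\in\Smpl\}$ and show the $T$ may be dropped. Because $\Av\ssp{M} = \ssp{M}$, every element of $\ssp{M}$ is $\Gp$-equivariant, so $\ssp{M}$ is $T$-invariant; since $T$ acts unitarily on $\vals\otimes\Fns$ (Prop.~\ref{prop:E-symmetries}), it commutes with $\prM$. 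Combining this with $T(\vals\otimes\wh{X}) = \vals\otimes T\wh{X}$ (as subspaces, using $T\vals=\vals$) and the fact that $T$ fixes every equivariant vector pointwise yields $\prM(\vals\otimes T\wh{X}) = T\prM(\vals\otimes\wh{X}) = \prM(\vals\otimes\wh{X})$.

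Next is the loss computation, the heart of the argument. The identity $\inpr{\wh{X}}{\wh{Y}} = P(X)^{-1}\delta_{X,Y}$, immediate from Prop.~\ref{prop:hats}, makes $\ssp{S}$ an orthogonal direct sum $\bigoplus_{X\in\Gp\cdot\Smpl} \vals\otimes\Cmplxs\wh{X}$. For any $\vec{g}=\sum_i\evec_i\otimes g^i\in\vals\otimes\Fns$, projecting onto each slice and using $\inpr{\wh{X}}{g^i} = g^i(X)$ together with $\|\wh{X}\|^{-2}=P(X)$ gives
\begin{equation*}
\|\prS\vec{g}\|^2 = \sum_{X\in\Gp\cdot\Smpl} P(X)\,\|\vec{g}(X)\|_\vals^2.
\end{equation*}
Take $\vec{g}=\vec{f}-\vec{F}$. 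Both $\vec{f}$ and $\vec{F}$ are $\Gp$-equivariant; each $T\in\Gp$ acts unitarily on $\vals$ and preserves $P$ (as a symmetry of $\EOp{}$, so $P(TX)=\EOp{}T\isit{X}=\EOp{}\isit{X}=P(X)$). Hence $P(TX)\|\vec{g}(TX)\|_\vals^2 = P(X)\|\vec{g}(X)\|_\vals^2$, so each $\Gp$-orbit contributes a common term. Since no two members of $\Smpl$ are $\Gp$-related, $\Gp\cdot\Smpl$ partitions into the $|\Smpl|$ orbits $\Gp Y$, and $P(\Gp Y)=|\Gp Y|\,P(Y)$ turns the sum into exactly $L(\vec{F})$.

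Together with the Proposition~\ref{prop:Up} characterization (\ref{eq:min}), minimizing $L$ over $\vec{F}\in\ssp{M}' = \prM\ssp{S}$ returns $\Up{\ssp{S}}{\ssp{M}}\vec{f}$. The main obstacle is the bookkeeping in the loss computation: tracking the $P(X)^{-1}$ normalization of the $\wh{X}$'s correctly, invoking equivariance (of both model and target) and unitarity on $\vals$ at precisely the point needed to collapse the sum over $\Gp\cdot\Smpl$ to a sum over $\Smpl$, and reassembling the per-orbit contribution as $P(\Gp Y)$.
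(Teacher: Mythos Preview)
Your proof is correct and follows essentially the same approach as the paper's: both reduce to Prop.~\ref{prop:Up}, use the orthogonal basis $\sqrt{P(Y)}\,\evec_i\,\hat{Y}$ of $\ssp{S}$ to compute $\|\prS(\vec{f}-\vec{F})\|^2$, and then collapse the sum over $\Gp\cdot\Smpl$ to $\Smpl$ via equivariance. Your version is more explicit than the paper's, in particular supplying the argument for (\ref{eq:prop-model-M}) and the orbit-collapse step that the paper leaves implicit under ``straightforward transcription.''
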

\begin{proof}
Except for (\ref{eq:loss-concrete}),
this is a straightforward transcription of Prop.~\ref{prop:Up}.
As an ONB of $\ssp{S}$, we take
$\sqrt{P(Y)} \vec{e}_i \hat{Y}$ where $\vec{e}_i$ runs over an ONB of $\vals$,
and $Y$ over $\Gp\cdot\Smpl$.
$\|\prS(\vec{f}-\vec{F})\|_{\vals\otimes\Fns}$ is then
the sum of terms \hbox{${P(Y)} |\ilinpr{\vec{e}_i\hat{Y}}{\vec{f}-\vec{F}}|^2$}
over the indicated basis.
\end{proof}
%%%%%%%%%%

$L(\vec{F})$ differs from an ordinary least-squares loss function in that
each $\Gp$-orbit of configurations which is represented in $\Smpl$
makes a contribution weighted by its total $\EOp{}$-probability.
Also, it is completely determined by the conditions of problem P3.

\subsubsection{Model parametrization}

According to (\ref{eq:prop-model-M}), $\ssp{M}'$ may be
parameterized as
  \begin{equation}
  \label{eq:lambda-model}
  \vec{F}_\lambda \defeq \sum_{X\in\Smpl} \prM (\vec{\lambda}(X)\otimes\hat{X}),
\end{equation}
with a $\vals$-valued variational parameter $\vec{\lambda}(X)$ for each
\hbox{$X\in\Smpl$}.
Previous sections have discussed how to handle $\prM(\vec{\lambda}(X)\hat{X})$
computationally. See, in particular, (\ref{eq:CI-hat-ip}) and
Lemma~\ref{lem:tensor-equivariants}.
The parameterization in (\ref{eq:lambda-model}) might be redundant.
In particular, if $\Gp_X$ is nontrivial, then there are vectors
$0 \neq \vec{e}\in\vals$ for which $\Av (\vec{e}\hat{X}) = 0$.
However, that is harmless; it merely means that different $\lambda$
parameterize the unique $\vec{F}_\lambda$ minimizing $L(\vec{F}_\lambda)$.

On the other hand, if we parameterize candidate models as
\begin{equation}
  \vec{F}_e = \sum_{I\in\Small,\alpha} \vec{e}_I^\alpha \otimes \Phi_I^\alpha,
\end{equation}
using conventional cluster functions $\Phi_I^\alpha$ (or any other basis of $\ssp{M}$)
with coefficients $\vec{e}_I^\alpha\in\vals$,
then the minimizing function will be nonunique whenever $\ssp{M}'$ is not
all of $\ssp{M}$. A standard fix for such underdetermination is to
add a penalty term involving an adjustable scale parameter to the loss function.
Such ad hoc fixes are rendered unnecessary by parameterizing only $\ssp{M}'$,
as in (\ref{eq:lambda-model}).
Circumstances under which $\Smpl$ might be deliberately engineered so
that $\ssp{M}'$ is a proper subspace of $\ssp{M}$ are not hard to imagine.
For instance, one may be interested in a low density of vacancies, and
therefore use only configurations with no more than one vacancy in any
cluster-shape in $\Small$.

%%%%%%%%%%%%%%%%%%%%%%%%%%%%%%%%%%%
\section{Conclusion}

Our primary objective here was to advance understanding of
\textit{what} cluster expansion really means.
To that end, we proposed a system of axioms for best-approximation operators
and a definition connecting them to basic building blocks we call cluster
components. The resulting structure is compatible with the traditional
cluster expansion formalism in a special case, without the need for
the usual cluster functions.
Due to the importance of cluster expansion methods in practical computations,
such a foundational advance potentially enjoys a large multiplier effect.
The approach presented here to the problem of modeling from a sample
supports that possibility, and achieves two specific ends.
First, a precise version of the problem is fully integrated with the
fundamental formalism, in contrast to the traditional approach, where
it appears ad hoc. Second, it shows that even for practical computations
one can do without conventional cluster functions.
Indeed, the alternative parametrization using sample data has an advantage
in avoiding a potential underdetermination problem.

%%%%%%%%%%%%%%%%%%%%%%%%%%%%%%%%%%%
\begin{acknowledgments}
  This work was supported by the National Science Foundation under
  grant DMR-2011839. We thank Miguel Mercado and Ismaila Dabo for
  asking questions and pointing out obscurities.
\end{acknowledgments}

%%%%%%%%%%%%%%%%%%%%%%%%%%%%%%%%%%%

\appendix

\section{M\"obius duality}\label{app:Mobius}

Notation in this appendix is different from in the main text.
For some set $\Omega$ (not necessarily finite), we are concerned with functions
on the set ${\mathscr P}_{\text{fin}}(\Omega)$, with values in some vector space
$V$. Such subsets will be denoted $I$, $J$, $K$, etc., and use a function on them
written in the normal way, e.g., $f(I)$.
M\"obius inversion is a procedure applicable in a much wider range of
combinatoric settings\cite{van-Lint+Wilson} than discussed here, but this one
is more than adequate for our needs.

The equivalence (\ref{eq:Mobius-duality}) is our target; the path is organized
as two lemmas.
\begin{lem}
  \label{claim:mobius-1}
  \begin{equation}
    \label{eq:mobius-lem}
    \sum_{J\colon K\subseteq J\subseteq I} \varepsilon(J) =
    \begin{cases}
      \varepsilon(I) & K=I \\
      0 & \text{otherwise}
    \end{cases}
  \end{equation}
\end{lem}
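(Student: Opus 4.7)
The plan is to reduce the sum to a purely combinatorial statement about subsets of a fixed finite set, after which the standard binomial-coefficient identity finishes things off.

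First I would dispose of the trivial case: if $K=I$, then the only $J$ with $K\subseteq J\subseteq I$ is $J=I$ itself, so the left-hand side of (\ref{eq:mobius-lem}) collapses to $\varepsilon(I)$, as claimed. From now on assume $K \subsetneq I$ and set $M \defeq I\setminus K$, which is a nonempty finite set.

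The key reparameterization is to write every $J$ with $K\subseteq J\subseteq I$ uniquely as $J = K \cup L$ for some $L\subseteq M$. Because $K$ and $L$ are disjoint, $|J| = |K|+|L|$, hence $\varepsilon(J) = \varepsilon(K)\,\varepsilon(L)$. This turns the sum into
\begin{equation}
\nonumber
\sum_{J\colon K\subseteq J\subseteq I}\varepsilon(J)
= \varepsilon(K)\sum_{L\subseteq M}\varepsilon(L)
= \varepsilon(K)\sum_{k=0}^{|M|}\binom{|M|}{k}(-1)^k.
\end{equation}
The last sum is $(1-1)^{|M|}$ by the binomial theorem, which vanishes because $|M|\ge 1$. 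This establishes the second case of (\ref{eq:mobius-lem}).

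There is no real obstacle here; the only thing to be careful about is the bookkeeping on the correspondence $J\leftrightarrow L = J\setminus K$, which is a bijection between $\{J : K\subseteq J\subseteq I\}$ and the power set of $M$. Everything else is a one-line application of the binomial theorem.
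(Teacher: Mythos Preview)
Your argument is correct. The only omission is the case $K\not\subseteq I$, where the sum is empty and hence trivially zero; you tacitly assume $K\subseteq I$ throughout. That is a one-line addition.

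Your route differs from the paper's. You reparameterize by $L=J\setminus K\subseteq M=I\setminus K$, factor $\varepsilon(J)=\varepsilon(K)\varepsilon(L)$, and invoke the binomial theorem to get $(1-1)^{|M|}=0$. The paper instead uses a pairing (sign-reversing involution): when $K\subsetneq I$, pick any $i\in I\setminus K$ and pair each $J$ not containing $i$ with $J\cup\{i\}$; the two contributions cancel. The two arguments are essentially dual proofs of the same alternating-sum identity---yours is algebraic via the binomial expansion, the paper's is bijective---and either is entirely adequate here. The pairing version has the minor virtue of avoiding any counting of subsets by size, while yours makes the connection to the binomial identity explicit.
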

\begin{proof}
  The cases $K\not\subseteq I$ and $K=I$ are obvious since the sum is
  empty in the first case and has only one term in the second.

Now consider $I' = I\cup \{i\}$, where $K\subseteq I$ and $i\not\in I$.
Subsets of $I'$ come in pairs of the form $(J, J\cup\{i\})$ with $J\subseteq I$.
Each such pair makes a contribution $1 + (-1) = 0$ to the sum in (\ref{eq:mobius-lem}).
\end{proof}

For convenience, define the {\it M\"obius transform}
\begin{equation}
  \label{eq:Mobius-transform}
({\Mop} f)(I) \defeq \sum_{J\subseteq I} f(J),  
\end{equation}
and a version of $\varepsilon$ rendered as a transformation
\begin{equation}
  \label{eq:phi-op}
({\hat{\varepsilon}} f)(I) \defeq \varepsilon(I) f(I).  
\end{equation}
Then, the composition $\hat\varepsilon \Mop$ is an involution:
%%%%%%%%%%%%%%%%
\begin{lem}
  \label{claim:Mobius-duality}
$(\hat{\varepsilon} \Mop)^2 = \Id$.
\end{lem}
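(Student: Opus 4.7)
The plan is to unfold the composition $(\hat\varepsilon\Mop)^2$ directly and reduce it to the identity by applying Lemma~\ref{claim:mobius-1}. Fix $f$ and $I$. First I would write
\begin{equation}
\nonumber
\bigl((\hat\varepsilon\Mop)^2 f\bigr)(I)
= \varepsilon(I)\sum_{K\subseteq I}\varepsilon(K)\sum_{J\subseteq K} f(J),
\end{equation}
obtained by two successive applications of the definitions (\ref{eq:Mobius-transform}) and (\ref{eq:phi-op}).

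Next I would swap the order of the two finite sums. The double sum ranges over pairs $(K,J)$ with $J\subseteq K\subseteq I$, and reindexing by fixing $J$ first gives
\begin{equation}
\nonumber
\bigl((\hat\varepsilon\Mop)^2 f\bigr)(I)
= \varepsilon(I)\sum_{J\subseteq I} f(J) \!\!\sum_{K\colon J\subseteq K\subseteq I}\!\!\varepsilon(K).
\end{equation}

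Now I would invoke Lemma~\ref{claim:mobius-1}: the inner sum equals $\varepsilon(I)$ when $J=I$ and vanishes otherwise. Only the term $J=I$ survives, so the expression collapses to $\varepsilon(I)^2 f(I) = f(I)$, which proves $(\hat\varepsilon\Mop)^2 = \Id$. The only nontrivial ingredient is the alternating-sum cancellation of Lemma~\ref{claim:mobius-1}, which has already been established; everything else is bookkeeping with the definitions and a Fubini-style reindexing of a finite double sum. There is no real obstacle, so the proof should be short.
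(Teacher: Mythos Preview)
Your proof is correct and follows essentially the same route as the paper's: unfold the definitions, interchange the two finite sums, and apply Lemma~\ref{claim:mobius-1} to collapse everything to $f(I)$. The only cosmetic difference is that the paper computes $\Mop\hat\varepsilon\Mop f(I)$ and shows it equals $(\hat\varepsilon f)(I)$, whereas you carry the outer factor $\varepsilon(I)$ through explicitly; these are the same argument.
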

\begin{proof}
  \begin{align}
    \Mop \hat{\varepsilon} \Mop f(I)
    &= \sum_{J\subseteq I} \hat{\varepsilon} \Mop f(J)
      \nonumber \\ &
                     =\sum_{J \subseteq I} \varepsilon(J) \sum_{K \subseteq J } f(K)
      \nonumber \\ &
                    = \sum_{K \subseteq I} f(K) 
                     \sum_{J: K \subseteq J \subseteq I} \varepsilon(J)
      \nonumber \\ &
                     \stackrel{\text{claim. \ref{claim:mobius-1}} }{=}
                     (\hat{\varepsilon} f)(I)
  \end{align}
\end{proof}
This lemma immediately implies the following {\it M\"obius duality} relation:
\begin{equation}
  \label{eq:Mobius-duality}
g = \Mop f \;\Leftrightarrow\; \hat{\varepsilon} f = \Mop (\hat{\varepsilon} g)    
  \end{equation}

  %%%%%%%%%%%%%%%%
%%%%%%%%%%%%%%%%%%%%%%%%%
%  \bibliography{HEO.bib}

%apsrev4-2.bst 2019-01-14 (MD) hand-edited version of apsrev4-1.bst
%Control: key (0)
%Control: author (8) initials jnrlst
%Control: editor formatted (1) identically to author
%Control: production of article title (0) allowed
%Control: page (0) single
%Control: year (1) truncated
%Control: production of eprint (0) enabled
%

%%%%%%%%%%%%%%%%%%%%%

\end{document}